\documentclass[11pt]{amsart}
\usepackage{amssymb}
\usepackage{amsmath}
\usepackage{amscd}
\usepackage{amsfonts}
\usepackage{mathrsfs}
\usepackage{stmaryrd}
\usepackage{bbm}
\newtheorem{theorem}{Theorem}[section]
\newtheorem{prop}[theorem]{Proposition}
\newtheorem{defn}[theorem]{Definition}
\newtheorem{lemma}[theorem]{Lemma}
\newtheorem{coro}[theorem]{Corollary}
\newtheorem{prop-def}{Proposition-Definition}[section]

\newtheorem{remark}[theorem]{Remark}

\newtheorem{exam}[theorem]{Example}

\textheight 23cm \textwidth 16cm \topmargin -0.8cm
\begin{document}
\setlength{\oddsidemargin}{0cm} \setlength{\evensidemargin}{0cm}

\title{Some results on L-dendriform algebras}

\author{Chengming Bai}

\address{Chern Institute of Mathematics \& LPMC, Nankai University,
Tianjin 300071, P.R. China} \email{baicm@nankai.edu.cn}

\author{Ligong Liu}

\address{Chern Institute of Mathematics \& LPMC, Nankai
University, Tianjin 300071, P.R.
China}\email{liuligong@mail.nankai.edu.cn}

\author{Xiang Ni}

\address{Chern Institute of Mathematics \& LPMC, Nankai
University, Tianjin 300071, P.R.
China}\email{xiangn$_-$math@yahoo.cn}

\def\shorttitle{Some results on L-dendriform algebras}

\begin{abstract}

We introduce a notion of L-dendriform algebra due to several
different motivations. L-dendriform algebras are regarded as the
underlying algebraic structures of pseudo-Hessian structures on Lie
groups and the algebraic structures behind the $\mathcal
O$-operators of pre-Lie algebras and the related $S$-equation. As a
direct consequence, they provide some explicit solutions of
$S$-equation in certain pre-Lie algebras constructed from
L-dendriform algebras. They also fit into a bigger framework as Lie
algebraic analogues of dendriform algebras. Moreover, we introduce a
notion of $\mathcal O$-operator of an L-dendriform algebra which
gives an algebraic equation regarded as an analogue of the classical
Yang-Baxter equation in a Lie algebra.

\end{abstract}

\subjclass[2000]{16W30, 17A30, 17B60}

\keywords{Lie algebra, pre-Lie algebra, $\mathcal O$-operator,
classical Yang-Baxter equation}

\maketitle


\baselineskip=15.8pt

\section{Introduction}
\setcounter{equation}{0}
\renewcommand{\theequation}
{1.\arabic{equation}}

\subsection{Motivations}

In this paper, we introduce a new class of algebras, namely,
L-dendriform algebras, due to the following different motivations.

(1) {\bf Pseudo-Hessian structures.}\quad An L-dendriform algebra is
regarded as the underlying algebraic structure of a pseudo-Hessian
structure on a Lie group. In geometry, a Hessian manifold $M$ is a
flat affine manifold provided with a Hessian metric $g$, that is,
$g$ is a Riemannian metric such that for any each point $p\in M$
there exists a $C^\infty$-function $\varphi$ defined on a
neighborhood of $p$ such that
$g_{ij}=\frac{\partial^2\varphi}{\partial x^i\partial x^j}$. The
algebraic structure corresponding to an affine Lie group $G$ with a
$G$-invariant Hessian metric is a real pre-Lie algebra (see a survey
article \cite{Bu} for the study of pre-Lie algebras) with a
symmetric and positive definite 2-cocycle (\cite{Sh}). The Hessian
structures could be extended to pseudo-cases by replacing ``positive
definite" by ``nondegenerate" over the real number field, which
could be extended to the other fields at the level of algebraic
structures. We will show that there exists a natural L-dendriform
algebra structure on a pre-Lie algebra with a nondegenerate
symmetric 2-cocycle.

(2) {\bf ${\mathcal O}$-operators and $S$-equation in pre-Lie
algebras}. In fact, pre-Lie algebras can be regarded as the
algebraic structures behind the classical Yang-Baxter equation
(CYBE) which plays an important role in integrable systems, quantum
groups and so on (\cite{CP} and the references therein). It could be
seen more clearly in terms of $\mathcal O$-operators of a Lie
algebra introduced by Kupershmidt in \cite{K} as generalizations of
(the operator form of) the CYBE in a Lie algebra (\cite{Se}).
Explicitly, the $\mathcal O$-operators of Lie algebras provide a
direct relationship between Lie algebras and pre-Lie algebras and in
the invertible cases, they provide a necessary and sufficient
condition for the existence of a compatible pre-Lie algebra
structure on a Lie algebra. Moreover, a skew-symmetric solution of
the CYBE and the Lie algebraic analogue of the Rota-Baxter operator
(\cite{Bax,Rot}) which gives exactly the operator form of the CYBE
in \cite{Se} are understood as the $\mathcal O$-operators associated
to the co-adjoint representation and adjoint representation
respectively. Furthermore, there are some solutions of the CYBE in
certain Lie algebras obtained from pre-Lie algebras (\cite{Bai1}).

On the other hand, an analogue of the CYBE in a pre-Lie algebra,
namely, $S$-equation, was introduced in \cite{Bai2} which is closely
related to a kind of bialgebra structures on pre-Lie algebras. In
order to understand $S$-equation well, we introduce a notion of
$\mathcal O$-operator of a pre-Lie algebra in this paper and we will
show that it plays a similar role of the $\mathcal O$-operator of a
Lie algebra. Then it is natural to ask what algebraic structures
behind the $\mathcal O$-operators of pre-Lie algebras and the
related $S$-equation? The answer is L-dendriform algebras!

(3) {\bf Dendriform algebras and Loday algebras}. In fact,
L-dendriform algebras fit into a bigger framework. Recall that a
{\it dendriform algebra} $(A, \prec, \succ)$ is a vector space $A$
with two binary operations denoted by $\prec$ and $\succ $
satisfying (for any $x,y,z\in A$)
\begin{equation}(x\prec y)\prec z=x\prec (y*z),\;\;(x\succ y)\prec
z=x\succ (y\prec z),\;\;x\succ (y\succ z)=(x*y)\succ
z,\end{equation} where $x*y=x\prec y+x\succ y$. Note that $(A,*)$ is
an associative algebra as a direct consequence.

The notion of dendriform algebra was introduced by Loday (\cite{L1})
in 1995 with motivation from algebraic $K$-theory and has been
studied quite extensively with connections to several areas in
mathematics and physics, including operads, homology, Hopf algebras,
Lie and Leibniz algebras, combinatorics, arithmetic and quantum
field theory and so on (see \cite{EMP} and the references therein).
Moreover, there is the following relationship among Lie algebras,
associative algebras, pre-Lie algebras and dendriform algebras in
the sense of commutative diagram of categories (\cite{A,C,Ron}):
\begin{equation}\begin{matrix} \mbox{Lie algebra}
&\stackrel{}{\leftarrow} & \mbox{Pre-Lie algebra} \cr \uparrow
&&\uparrow  \cr
 \mbox{Associative algebra} &\stackrel{}{\leftarrow}
& \mbox{Dendriform algebra}\cr
\end{matrix}\end{equation}

Later quite a few more similar algebra structures have been
introduced, such as quadri-algebras of Aguiar and Loday (\cite{AL}).
All of them are called Loday algebras (\cite{Va}). These algebras
have a common property of ``splitting associativity", that is,
expressing the multiplication of an associative algebra as the sum
of a string of binary operations (\cite{L2}).

In order to extend the commutative diagram (1.2) at the level of
associative algebras (the bottom level of the commutative diagram
(1.2)) to the more Loday algebras, it is natural to find the
corresponding algebraic structures at the level of Lie algebras
which extends the top level of commutative diagram (1.2). We will
show that the L-dendriform algebras are chosen in a certain sense
such that the following diagram including the diagram (1.2) as a
sub-diagram is commutative:
\begin{equation}\begin{matrix} \mbox{Lie algebra} &\stackrel{}{\leftarrow} &
\mbox{Pre-Lie algebra}& \stackrel{}{\leftarrow} & \mbox{L-dendriform
algebra}\cr & \stackrel{}{\nwarrow} & &\stackrel{}{\nwarrow} &
&\stackrel{}{\nwarrow} &  \cr
 & & \mbox{Associative algebra} &\stackrel{}{\leftarrow}
& \mbox{Dendriform algebra}&\stackrel{}{\leftarrow} &
\mbox{Quadri-algebra}& \cr
\end{matrix}
\end{equation}
In this sense, L-dendriform algebras are regarded as the Lie
algebraic analogues of dendriform algebras, which explains why we
give the notion of ``L-dendriform algebra". Furthermore, it is
reasonable to consider to interpret L-dendriform algebras in terms
of the Manin black product of symmetric operads (\cite{Va}).

\subsection{Layout of the paper}
 In Section 2, we recall some basic
facts on pre-Lie algebras and introduce the notion of $\mathcal
O$-operator of a pre-Lie algebra interpreting the $S$-equation. In
Section 3, we introduce the notion of L-dendriform algebra and then
study some fundamental properties of L-dendriform algebras in terms
of the $\mathcal O$-operators of pre-Lie algebras. In particular, we
interpret their relationships with pre-Lie algebras, $S$-equation,
pseudo-Hessian structures and some Loday algebras.
  In Section 4, we introduce a notion of $\mathcal O$-operator
of an L-dendriform algebra which gives an algebraic equation
regarded as an analogue of the CYBE in a Lie algebra.
 In Section 5, we discuss certain generalization of the study
in the previous sections.

\subsection{Notations}

Throughout this paper, all algebras are finite-dimensional and over
a field of characteristic zero. We also give some notations as
follows. Let $A$ be an algebra with a binary operation $*$.

(1) Let $L_*(x)$ and $R_*(x)$ denote the left and right
multiplication operator respectively, that is, $L_*(x)y=R_*(y)x=x*y$
for any $x,y\in A$. We also simply denote them by $L(x)$ and $R(x)$
respectively without confusion. Moreover let $L_*, R_*:A\rightarrow
gl(A)$ be two linear maps with $x\rightarrow L_*(x)$ and
$x\rightarrow R_*(x)$ respectively. In particular, when $A$ is a Lie
algebra, we let ${\rm ad}(x)$ denote the adjoint operator, that is,
${\rm ad} (x)y=[x,y]$ for any $x,y\in A$.

(2) Let $r=\sum_{i}a_i\otimes b_i\in A\otimes A$. Set
\begin{equation}
r_{12}=\sum_ia_i\otimes b_i\otimes 1, r_{13}=\sum_{i}a_i\otimes
1\otimes b_i,\;r_{23}=\sum_i1\otimes a_i\otimes b_i,
\end{equation}
where $1$ is the unit if $(A,*)$ is unital or a symbol playing a
similar role of the unit for the nonunital cases. The operation
between two $r$s is in an obvious way. For example,
\begin{equation}
r_{12}*r_{13}=\sum_{i,j}a_i*a_j\otimes b_i\otimes b_j,\; r_{13}*
r_{23}=\sum_{i,j}a_i\otimes a_j\otimes
b_i*b_j,\;r_{23}*r_{12}=\sum_{i,j}a_j\otimes a_i* b_j\otimes
b_i,\end{equation} and so on. Note that Eq. (1.5) is independent of
the existence of the unit.

(3) Let $V$ be a vector space.  Let $\sigma: V\otimes V \rightarrow
V\otimes V$ be the exchanging operator defined as
\begin{equation}\sigma (x \otimes y) = y\otimes x,
\forall x, y\in V.
\end{equation}
 On
the other hand, for any $r\in A\otimes A$, define a linear map
$F_r:A^*\rightarrow A$ by
\begin{equation} \langle F_r(u^*),v^*\rangle =\langle r, u^*\otimes
v^*\rangle,\;\;\forall u^*,v^*\in A^*,
\end{equation}
where $\langle,\rangle$ is the ordinary pair between the vector
space $V$ and the dual space $V^*$. This defines an invertible
linear map $F:A\otimes A\rightarrow {\rm Hom}(A^*,A)$ and thus
allows us to identify $r$ with $F_r$ which we still denote by $r$
for simplicity of notations. Moreover, any invertible linear map
$T:V^*\rightarrow V$ can induce a nondegenerate bilinear form
$\mathcal B( , )$ on $V$ by
\begin{equation}
\mathcal B(u,v)=\langle T^{-1}u, v\rangle,\;\;\forall u,v\in V.
\end{equation}

(4) Let $V$ be a vector space. For any linear map $\rho:
A\rightarrow gl(V)$, define a linear map $\rho^* : A \rightarrow
gl(V^*)$ by
\begin{equation}
\langle \rho^*(x)v^*, u\rangle = -\langle v^*, \rho(x)u\rangle,\
\forall x \in A, u\in V, v^*\in V^*.
\end{equation}

\section{Pre-Lie algebras}
\setcounter{equation}{0}
\renewcommand{\theequation}
{2.\arabic{equation}}

\subsection{Some fundamental properties of pre-Lie algebras}

\begin{defn} {\rm Let $A$ be a vector space with a
binary operation denoted by $\circ: A\otimes A\rightarrow A$.
$(A,\circ)$ is called a {\it pre-Lie algebra} if for any $x,y,z\in
A$, the associator
\begin{equation}
(x,y,z)=(x\circ y)\circ z-x\circ (y\circ z) \end{equation} is
symmetric in $x,y$, that is,
\begin{equation}
(x,y,z)=(y,x,z),\;{\rm or}\;{\rm equivalently}\;(x\circ y)\circ
z-x\circ (y\circ z)=(y\circ x)\circ z-y\circ (x\circ z),\;\forall
x,y,z\in A.
\end{equation}
}
\end{defn}

\begin{prop} {\rm (cf. \cite{G,Vi})}\quad Let $(A, \circ)$ be a pre-Lie algebra.

{\rm (1)} The commutator
\begin{equation} [x,y] =
x\circ y - y\circ x,\;\;\forall x,y\in A
\end{equation} defines a Lie algebra $\frak g(A)$, which is called the sub-adjacent Lie algebra of
$A$ and $A$ is also called a compatible pre-Lie algebra structure
on the Lie algebra $\frak g (A)$.

{\rm (2)} $L_\circ$ gives a representation of the Lie algebra $\frak
g(A)$, that is,
\begin{equation}
L_\circ({[x, y]}) = L_\circ(x)L_\circ(y)- L_\circ(y)L_\circ(x),\
\forall x, y\in A.
\end{equation}
\end{prop}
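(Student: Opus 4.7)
The plan is to verify the two parts directly from the defining pre-Lie identity (2.2). For part (1), since the bracket is manifestly skew-symmetric by construction, the only nontrivial task is the Jacobi identity $[[x,y],z]+[[y,z],x]+[[z,x],y]=0$. I would expand each of the three terms using $[a,b]=a\circ b-b\circ a$, producing twelve summands arranged in three cyclic ``blocks'' of four.

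The key observation is that the pre-Lie identity (2.2) can be rewritten as
\[
(x\circ y)\circ z-(y\circ x)\circ z\;=\;x\circ(y\circ z)-y\circ(x\circ z),
\]
i.e.\ the ``associator asymmetry'' on the left equals a commutator of left multiplications on the right. Applying this to each cyclic block converts the four terms $(x\circ y)\circ z-(y\circ x)\circ z-z\circ(x\circ y)+z\circ(y\circ x)$ into $x\circ(y\circ z)-y\circ(x\circ z)-z\circ(x\circ y)+z\circ(y\circ x)$, and similarly for the other two blocks. Summing the three transformed blocks, every term of the form $u\circ(v\circ w)$ appears exactly once with a $+$ sign and once with a $-$ sign, so the total telescopes to zero. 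This establishes Jacobi and hence makes $\mathfrak{g}(A)$ a Lie algebra.

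For part (2), I would simply evaluate both sides on an arbitrary $z\in A$. The left-hand side gives $L_\circ([x,y])z=(x\circ y)\circ z-(y\circ x)\circ z$, while the right-hand side gives $L_\circ(x)L_\circ(y)z-L_\circ(y)L_\circ(x)z=x\circ(y\circ z)-y\circ(x\circ z)$. Their equality is precisely the reformulation of (2.2) displayed above, so (2) is in fact equivalent to the pre-Lie axiom and requires no further argument.

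The only ``obstacle'' is bookkeeping in part (1): one must be careful to apply the pre-Lie identity consistently to the correct pair of terms in each cyclic block (namely the two of the form $(\cdot\circ\cdot)\circ(\cdot)$), and then verify that the remaining $z\circ(x\circ y)$-type terms pair up across blocks to cancel. Once this indexing is handled cleanly, both (1) and (2) reduce to a single use of (2.2).
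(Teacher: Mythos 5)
Your proposal is correct. The paper itself offers no proof of this proposition, citing it as classical (cf.\ Gerstenhaber and Vinberg), and your argument is exactly the standard verification that those references supply: the rewritten identity $(x\circ y)\circ z-(y\circ x)\circ z=x\circ(y\circ z)-y\circ(x\circ z)$ is precisely part (2), and applying it to each of the three cyclic blocks of the expanded Jacobi sum makes all twelve terms of the form $u\circ(v\circ w)$ cancel in pairs, as you describe. A nice feature of your write-up, worth keeping, is that it makes explicit that part (2) is not merely a consequence but an equivalent reformulation of the pre-Lie axiom (2.2), which is the content the paper later exploits in Proposition 2.3.
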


\begin{prop} Let $\frak g $ be a vector space with a binary operation $\circ$.
Then $(\frak g , \circ)$ is a pre-Lie algebra if and only if $(\frak
g , [,])$ defined by Eq. (2.3) is a Lie algebra and $(L_\circ, \frak
g )$ is a representation.
\end{prop}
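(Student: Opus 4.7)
The plan is to treat the two directions separately. The ``only if'' direction is already supplied by Proposition 2.2, so I would simply cite it: if $(\mathfrak{g},\circ)$ is pre-Lie, then $[\cdot,\cdot]$ defines a Lie algebra structure on $\mathfrak{g}$ and $L_\circ$ is a representation of that Lie algebra by parts (1) and (2) of Proposition 2.2.

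For the ``if'' direction, assume $(\mathfrak{g},[,])$ is a Lie algebra with bracket given by Eq.~(2.3) and that $L_\circ$ is a representation, i.e.\ $L_\circ([x,y])=L_\circ(x)L_\circ(y)-L_\circ(y)L_\circ(x)$. Applying this identity to an arbitrary $z\in\mathfrak{g}$ and substituting $[x,y]=x\circ y-y\circ x$ yields
\begin{equation*}
(x\circ y)\circ z-(y\circ x)\circ z=x\circ(y\circ z)-y\circ(x\circ z),
\end{equation*}
which, after transposing, is precisely the pre-Lie identity (2.2). Hence $(\mathfrak{g},\circ)$ is a pre-Lie algebra.

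Thus the whole content of the proposition reduces to recognizing that the representation condition on $L_\circ$, once the commutator is expanded in terms of $\circ$, is literally the defining identity of a pre-Lie algebra. There is essentially no obstacle; the only thing worth remarking is that in the ``if'' direction the assumption that $[,]$ satisfies the Jacobi identity and antisymmetry is not actually needed to recover the pre-Lie axiom from the representation identity, but it is included because ``representation of $\mathfrak{g}$'' presupposes that $\mathfrak{g}$ is a Lie algebra. Writing a symmetric statement (and a symmetric short proof) is therefore the cleanest way to present the equivalence.
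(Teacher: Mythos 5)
Your proof is correct and is exactly the argument the paper has in mind: the paper states this proposition without an explicit proof, treating the ``only if'' direction as Proposition 2.2 and the ``if'' direction as the immediate observation that $L_\circ([x,y])z = L_\circ(x)L_\circ(y)z - L_\circ(y)L_\circ(x)z$, expanded via $[x,y]=x\circ y-y\circ x$, is a rearrangement of the pre-Lie identity (2.2). Your closing remark that the Jacobi identity is not actually consumed in the ``if'' direction is also accurate and worth noting.
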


\begin{defn}{\rm (\cite{Bai2})\quad Let $(A,\circ)$ be a pre-Lie algebra and $V$ be a vector space. Let
$l,r: A\rightarrow gl(V)$ be two linear maps. $(l,r,V)$ is called a
{\it module of $(A,\circ)$} if
\begin{equation}
l(x)l(y)-l(x\circ y)=l(y)l(x)-l(y\circ x),
\end{equation}
\begin{equation}
l(x)r(y)-r(y)l(x)=r(x\circ y)-r(y)r(x),\forall x,y \in A.
\end{equation}}\end{defn}

In fact, $(l,r, V)$ is a module of a pre-Lie algebra $(A,\circ)$ if
and only if the direct sum $A\oplus V$ of the underlying vector
spaces of $A$ and $V$ is turned into a pre-Lie algebra (the {\it
semidirect sum}) by defining multiplication in $A\oplus V$ by
\begin{equation}
(x+u)*(y+v)=x\circ y+(l(x)v+r(y)u),\;\;\forall x,y\in A, u,v\in
V.\end{equation} We denote it by $A\ltimes_{l,r}V$.

\begin{prop} {\rm (\cite{Bai2})}\quad Let $(l,r,V)$ be a module
of a pre-Lie algebra $(A,\circ)$.  Then $(l^*-r^*, -r^*, V^*)$ is a
module of $(A,\circ)$. \end{prop}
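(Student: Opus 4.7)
The plan is a direct verification of the two module axioms (2.5) and (2.6) for the triple $(l^*-r^*,\,-r^*,\,V^*)$. For each axiom I would pair the proposed equality of operators on $V^*$ against arbitrary $v^*\in V^*$ on the left and $u\in V$ on the right, and use the dualization rule (1.9) to move every operator back to $V$. Because (1.9) carries a sign and reverses the order in any product $\rho^*(x)\rho^*(y)$, each of the two axioms collapses to a finite linear identity on $V$ involving only $l$ and $r$ applied to $x$, $y$, $x\circ y$, and $y\circ x$.

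For the first axiom, this reduction shows that (2.5) for $(l^*-r^*,\,-r^*,\,V^*)$ is equivalent to the statement that $l-r\colon A\to gl(V)$ is a representation of the sub-adjacent Lie algebra $\mathfrak g(A)$, i.e.\ $[(l-r)(x),\,(l-r)(y)]=(l-r)([x,y])$. I would verify this by expanding the left-hand commutator into the four pieces $[l(x),l(y)]$, $-[l(x),r(y)]$, $-[r(x),l(y)]$, and $[r(x),r(y)]$, applying (2.5) to the first and applying (2.6) (once as given and once with $x,y$ interchanged) to the two mixed pieces; the $r(x)r(y)$ and $r(y)r(x)$ contributions then cancel, and only $l([x,y])-r([x,y])=(l-r)([x,y])$ survives on the right.

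For the second axiom, the same pairing-and-dualization procedure reduces the desired identity to
\begin{equation*}
(l-r)(x)\,r(y)-r(y)\,(l-r)(x)+r(x)r(y)-r(x\circ y)=0
\end{equation*}
as an identity on $V$; expanding $l-r$ makes the two $r(x)r(y)$ terms cancel, and what remains is precisely (2.6) for the original module $(l,r,V)$. I expect no conceptual obstacle: the whole argument is bookkeeping. The one delicate point is the sign convention in (1.9)---each application of the dualization flips a sign and reverses order in products, while the extra minus in $\hat r=-r^*$ cancels exactly one such flip, so the signs on the right-hand sides of (2.5) and (2.6) have to be matched consistently through the computation.
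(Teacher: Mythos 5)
Your proof is correct: the reduction of axiom (2.5) for $(l^*-r^*,-r^*,V^*)$ to the statement that $l-r$ is a representation of the sub-adjacent Lie algebra $\frak g(A)$, and of axiom (2.6) to the identity $(l-r)(x)r(y)-r(y)(l-r)(x)+r(x)r(y)-r(x\circ y)=0$, which is exactly Eq.\ (2.6) after the $r(x)r(y)$ cancellation, both check out, with the signs and order-reversal from Eq.\ (1.9) handled consistently. The paper itself gives no proof of this proposition (it is quoted from \cite{Bai2}), and your direct dualization computation is essentially the standard verification given there, so the approaches coincide.
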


\begin{defn}
{\rm Let $(A,\circ)$ be a pre-Lie algebra and $\mathcal {B}$ be a
bilinear form on $A$. $\mathcal{B}$ is called a {\it 2-cocycle of
$A$} if $\mathcal {B}$ satisfies
\begin{equation}
\mathcal {B}(x\circ y, z) - \mathcal {B}(x, y\circ z) = \mathcal
{B}(y\circ x, z) - \mathcal {B}(y, x\circ z),\forall x, y, z\in A.
\end{equation}}\end{defn}

\begin{remark}{\rm
A real pre-Lie algebra $A$ endowed with a symmetric and definite
positive 2-cocycle corresponds to an affine Lie group $G$ with a
$G$-invariant Hessian metric (\cite{Sh}). }\end{remark}

\begin{defn}{\rm
Let $(A, \circ)$ be a pre-Lie algebra and $r \in A\otimes A$. The
following equation is called {\it $S$-equation in $(A,\circ)$}:
\begin{equation}
-r_{12}\circ r_{13} + r_{12}\circ r_{23} + [r_{13}, r_{23}] = 0.
\end{equation}}
\end{defn}

The $S$-equation in a pre-Lie algebra is an analogue of the CYBE in
a Lie algebra, which is related to the study of pre-Lie bialgebras
(\cite{Bai2}).

\subsection{$\mathcal {O}$-operators of pre-Lie algebras and $S$-equation}

\begin{defn}{\rm Let $(A, \circ)$ be a
pre-Lie algebra and $(l, r, V)$ be a module. A linear map $T:
V\rightarrow A$ is called an {\it $\mathcal {O}$-operator associated
to $(l, r, V)$}  if $T$ satisfies
\begin{equation}
T(u)\circ T(v) = T(l(T(u))v + r(T(v))u), \forall u, v \in V.
\end{equation}}\end{defn}

\begin{exam}{\rm Let $(A,\circ)$ be a pre-Lie algebra.
A linear map $R: A \rightarrow A$ is called a {\it Rota-Baxter
operator (of weight 0)} on $A$ if $R$ is an $\mathcal {O}$-operator
associated to the module $(L, R, A)$, that is, $R$ satisfies
(\cite{LHB})
\begin{equation}
R(x) \circ R(y) = R(R(x) \circ y + x \circ R(y)), \forall x, y \in
A.
\end{equation}
}\end{exam}

\begin{remark} {\rm In the case of associative algebras, the linear
map $T$ satisfying Eq. (2.10) was introduced independently  in
\cite{U} under a notion of {\it generalized Rota-Baxter operator}.
}\end{remark}

\begin{theorem}
Let $(A, \circ)$ be a pre-Lie algebra and $r\in A\otimes A$ be
symmetric. Then  $r$ is a solution of $S$-equation in $A$ if and
only if $r$ is an $\mathcal {O}$-operator of $(A,\circ)$ associated
to $(L_\circ^*-R_\circ^*, -R_\circ^*, A^*)$.
\end{theorem}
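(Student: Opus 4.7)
The plan is to translate both sides of the claimed equivalence into linear identities on $A^{*\otimes 3}$ (for the $S$-equation side) and on $A$ (for the $\mathcal{O}$-operator side), and then match them via the pairing.

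First, I would use the identification from $(1.7)$ to view $r=\sum_i a_i\otimes b_i$ as the linear map $F_r:A^*\to A$ with $F_r(u^*)=\sum_i\langle a_i,u^*\rangle b_i$. A key bookkeeping point I want to establish up front is that the symmetry of $r$ is exactly the self-duality of $F_r$, i.e.\ $\langle F_r(u^*),v^*\rangle=\langle F_r(v^*),u^*\rangle$, and that it also gives the alternative expression $F_r(u^*)=\sum_i\langle b_i,u^*\rangle a_i$. I would denote $F_r$ simply by $r$ to be consistent with the statement.

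Next, I would write down what it means for $r$ to be an $\mathcal{O}$-operator associated to $(L_\circ^{*}-R_\circ^{*},-R_\circ^{*},A^{*})$: namely
\begin{equation*}
r(u^{*})\circ r(v^{*}) \;=\; r\bigl((L_\circ^{*}-R_\circ^{*})(r(u^{*}))v^{*}\bigr)\;-\;r\bigl(R_\circ^{*}(r(v^{*}))u^{*}\bigr),\qquad\forall u^{*},v^{*}\in A^{*}.
\end{equation*}
I would then pair both sides with an arbitrary $w^{*}\in A^{*}$, replace $\langle r(X),w^{*}\rangle$ by $\langle r(w^{*}),X\rangle$ using the symmetry just noted, and unravel $L_\circ^{*},R_\circ^{*}$ via $(1.9)$. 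This produces a purely scalar identity of the shape
\begin{equation*}
\langle r(u^{*})\circ r(v^{*}),w^{*}\rangle - \langle r(w^{*})\circ r(u^{*}),v^{*}\rangle + \langle r(u^{*})\circ r(w^{*}),v^{*}\rangle - \langle r(w^{*})\circ r(v^{*}),u^{*}\rangle = 0.
\end{equation*}

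Then I would pair the $S$-equation $(2.9)$ with a generic elementary tensor $u^{*}\otimes v^{*}\otimes w^{*}$. Using the formulas for $r_{12}\circ r_{13}$, $r_{12}\circ r_{23}$, $r_{13}\circ r_{23}$, $r_{23}\circ r_{13}$ obtained as in $(1.5)$, every scalar $\sum_{i,j}\langle a_i,\cdot\rangle\langle b_j,\cdot\rangle\langle\cdot,\cdot\rangle$ can be collapsed using the two presentations of $r$ (here again symmetry enters crucially) into expressions of the form $\langle r(\bullet)\circ r(\bullet),\bullet\rangle$. This gives another scalar identity, which a direct computation will show matches the one above after relabeling $v^{*}\leftrightarrow w^{*}$ (both identities must hold for \emph{all} choices of the three covectors, so the relabeling is legitimate). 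The logical equivalence of the two identities, viewed over all $u^{*},v^{*},w^{*}$, then yields the theorem.

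The main obstacle is simply the careful bookkeeping: correctly computing the three tensor-component operations in $(2.9)$ in terms of $r$ as a map $A^{*}\to A$, consistently using the symmetry of $r$ to convert between the $a_i$-sum and $b_i$-sum descriptions of $r(\xi^{*})$, and keeping signs straight when moving $L_\circ^{*},R_\circ^{*}$ across the pairing via $(1.9)$. Once that dictionary is in place the equivalence becomes a symmetry/relabeling observation rather than a genuine computation.
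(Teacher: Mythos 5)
Your proposal is correct and is essentially the paper's own proof in coordinate-free form: the paper fixes a basis, writes $r=\sum a_{ij}e_i\otimes e_j$ with $a_{ij}=a_{ji}$, and matches the structure-constant coefficients of the $\mathcal{O}$-operator identity against the permuted form $r_{13}\circ r_{23}+[r_{12},r_{23}]-r_{13}\circ r_{12}=0$ of the $S$-equation (quoted from \cite{Bai2}), which is exactly what your pairing with arbitrary covectors accomplishes invariantly. Your displayed scalar identity for the $\mathcal{O}$-operator condition has the correct signs (pairing the $S$-equation with $u^*\otimes v^*\otimes w^*$ gives $\langle r(u^*)\circ r(v^*),w^*\rangle-\langle r(v^*)\circ r(u^*),w^*\rangle+\langle r(u^*)\circ r(w^*),v^*\rangle-\langle r(v^*)\circ r(w^*),u^*\rangle=0$, which is your identity after $v^*\leftrightarrow w^*$), and your relabeling step is the basis-free counterpart of the paper's appeal to the equivalent form of the $S$-equation.
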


\begin{proof} Let $\{e_1, ..., e_n\}$ be a basis of $A$ and $\{e_1^*, ...,
e_n^*\}$ be the dual basis. Suppose that $e_i \circ e_j =
\sum\limits_{k=1}^n c_{ij}^k e_k$ and $r = \sum\limits_{i,j=1}^n
a_{ij} e_i\otimes e_j$, $a_{ij} = a_{ji}$. Hence $r(e^*_i) =
\sum\limits_{k=1}^n a_{ik}e_k$. Then the coefficient of $e_k$ in
$$r(e^*_i)\circ r(e_j^*) - r((L_\circ^*-R_\circ^*)(r(e^*_i))e_j^* -
R^*_\circ(r(e_j^*))e_i^*) =0,$$ is
$$\sum_{t,l=1}^n (a_{it} a_{jl} c_{tl}^k +a_{it} a_{lk} c_{tl}^j -
a_{it} a_{lk} c_{lt}^j - a_{jt} a_{lk} c_{lt}^i)=0,$$ which is
precisely the coefficient of $e_i\otimes e_j\otimes e_k$ in the
equation
$$r_{13}\circ r_{23} + [r_{12}, r_{23}] - r_{13}\circ r_{12} = 0,$$
which is an equivalent form of $S$-equation (\cite{Bai2}). Therefore
the conclusion holds.
\end{proof}

\begin{theorem}
Let $(A,\circ)$ be a pre-Lie algebra and $(l, r, V)$ be a module.
Let $T : V \rightarrow A$ be a linear map which is identified as an
element in the vector space $(A\oplus V^*)\otimes (A\oplus V^*)$.
Then $r = T + \sigma(T)$ is a symmetric solution of $S$-equation in
the pre-Lie algebra $A\ltimes_{l^*-r^*, -r^*} V^*$ if and only if
$T$ is an $\mathcal {O}$-operator of $(A,\circ)$ associated to $(l,
r, V )$.
\end{theorem}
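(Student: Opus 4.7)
The plan is to reduce this statement to the preceding theorem (which characterizes symmetric solutions of the $S$-equation in any pre-Lie algebra as $\mathcal O$-operators associated to the coadjoint-type module) applied to the semidirect-sum pre-Lie algebra $\hat A := A\ltimes_{l^*-r^*,\,-r^*} V^*$. Since $T\in\mathrm{Hom}(V,A)$ is identified with an element of $A\otimes V^*\subseteq \hat A\otimes \hat A$, the tensor $r=T+\sigma(T)$ lies in $\hat A\otimes\hat A$ and is manifestly symmetric. Hence by the preceding theorem, $r$ is a symmetric solution of the $S$-equation in $\hat A$ if and only if $r$, viewed as a linear map $\hat A^*\to\hat A$ via (1.7), is an $\mathcal O$-operator of $\hat A$ associated to $(L_*^*-R_*^*,\,-R_*^*,\,\hat A^*)$, where $*$ denotes the multiplication of $\hat A$.

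The second step is to translate this $\mathcal O$-operator condition back to a statement about $T$. Using the canonical identification $\hat A^* = (A\oplus V^*)^* \cong A^*\oplus V$ (valid in finite dimension), the element $r$ corresponds to the linear map
\[
\hat A^*\longrightarrow\hat A,\qquad u^*+v\ \longmapsto\ T(v)+T^{\mathrm t}(u^*),
\]
where $T^{\mathrm t}:A^*\to V^*$ is the transpose of $T$ defined by $\langle T^{\mathrm t}(u^*),w\rangle=\langle u^*,T(w)\rangle$. In parallel, using the semidirect-sum product (2.7) together with the dual-representation convention (1.9), I would write out the actions $(L_*^*-R_*^*)(x+\xi)$ and $-R_*^*(x+\xi)$ on $A^*\oplus V$ explicitly in terms of $L_\circ$, $R_\circ$, $l$, $r$ and their duals.

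With these formulas, the $\mathcal O$-operator equation on $\hat A$ applied to $\xi_i=u_i^*+v_i$ decomposes along $\hat A=A\oplus V^*$ into two components. The $A$-component, after all identifications, should collapse to
\[
T(v_1)\circ T(v_2)\ =\ T\bigl(l(T(v_1))v_2+r(T(v_2))v_1\bigr),
\]
which is precisely the $\mathcal O$-operator condition (2.10) for $T$ associated to $(l,r,V)$. The $V^*$-component, upon pairing with an arbitrary $w\in V$, should turn out to be equivalent to the transpose of the $A$-component equation and hence follow automatically.

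The main obstacle will be the systematic bookkeeping in the $V^*$-component: one has to verify that after expanding the nested duals and the transpose $T^{\mathrm t}$, the resulting identity in $V^*$ produces no constraint on $T$ beyond (2.10). I expect this to follow from the pre-Lie module axioms (2.5)--(2.6) combined with nondegeneracy of the natural pairing between $V^*$ and $V$. Once this cancellation is verified, both directions of the equivalence are immediate, and the proof is complete.
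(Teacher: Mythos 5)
Your proposal is correct, but it inverts the paper's logical order. The paper proves this theorem head-on: it expands $T=\sum_i T(v_i)\otimes v_i^*\in T(V)\otimes V^*$ in a basis and computes the three tensors $-r_{12}\circ r_{13}$, $r_{12}\circ r_{23}$, $[r_{13},r_{23}]$ inside $\hat A=A\ltimes_{l^*-r^*,-r^*}V^*$, reading the $S$-equation off componentwise; the equivalence with ``$T+\sigma(T)$ is an $\mathcal O$-operator of $\hat A$'' --- your first step --- is only deduced afterwards, as Corollary 2.14, by combining Theorem 2.12 with the very theorem you are proving. You instead apply Theorem 2.12 to $\hat A$ first and then verify directly that the $\mathcal O$-operator condition for $\hat T:\hat A^*\to\hat A$, $u^*+v\mapsto T(v)+T^{\mathrm t}(u^*)$, is equivalent to Eq. (2.10); there is no circularity, since Theorem 2.12 is proved independently, and your bookkeeping does close. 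Writing the condition at $\xi=a^*+u$, $\eta=b^*+w$ and splitting along $\hat A=A\oplus V^*$, the $A$-component is literally Eq. (2.10), while the $V^*$-component, paired against an arbitrary $v\in V$ and separated according to the arbitrary $a^*$ and $b^*$, yields Eq. (2.10) once more (the $a^*$-part, your ``transpose'') together with
\begin{equation*}
[T(u),T(v)]=T\bigl((l-r)(T(u))v-(l-r)(T(v))u\bigr),
\end{equation*}
the $b^*$-part. One small correction to your heuristic: this last identity is not disposed of by the module axioms (2.5)--(2.6) and nondegeneracy, as you anticipated, but simply by subtracting two instances of Eq. (2.10) with the arguments swapped, so the $V^*$-component indeed imposes nothing beyond Eq. (2.10). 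As for what each route buys: the paper's coordinate computation exhibits explicitly which components of the $S$-equation encode the $\mathcal O$-operator identity, a template reused verbatim for Theorem 4.14 (stated there as ``by a similar proof''), whereas your argument avoids a second expansion of the $S$-equation, recycles Theorem 2.12, and establishes the equivalence $(1)\Leftrightarrow(3)$ of Corollary 2.14 directly rather than obtaining it as a corollary.
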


\begin{proof}  Let
$\{v_1,...,v_m\}$ be a basis of $V$ and $\{v^*_1,...,v^*_m\}$ be the
dual basis. Then
$$T = \sum_{i=1}^m T(v_i) \otimes v_i^* \in T(V) \otimes V^* \subset
(A\oplus V^*) \otimes (A\oplus V^*).$$ By Eq. (1.9), we show that
$$l^*(T(v_i))v_j = -\sum_{k=1}^m
v_j(l(T(v_i))v_k)v_k^*,r^*(T(v_i))v_j = -\sum_{k=1}^m
v_j(r(T(v_i))v_k)v_k^*.$$ Therefore we get
{\small \begin{eqnarray*} &&-r_{12} \circ r_{13} \\
&=& \sum_{i,j=1}^m [-(T(v_i)\circ T(v_j)\otimes v_i^* \otimes v_j^*
+ (l^* - r^*)(T(v_i))v_j^*\otimes v_i^*\otimes
T(v_j) -r^*(T(v_j))v_i^*\otimes T(v_i)\otimes v_j^*)]\\
&=& \sum_{i,j=1}^m [-T(v_i)\circ T(v_j)\otimes v_i^* \otimes v_j^* +
v_j^* \otimes v_i^*\otimes T((l-r)(T(v_i))v_j)- v_i^*\otimes
T(r(T(v_j))v_i)\otimes v_j^*].
\end{eqnarray*}}
Similarly, we have {\small \begin{eqnarray*} r_{12}\circ r_{23} &=&
\sum_{i,j=1}^m[T(r(T(v_j))v_i)\otimes v_i^*\otimes v_j^* -
 v_i^* \otimes v_j^*\otimes T((l-r)(T(v_i))v_j)+ v_i^*\otimes T(v_i)\circ T(v_j) \otimes v_j^*],
\end{eqnarray*}}
$$[r_{13}, r_{23}] = \sum_{i,j=1}^m (T(l(T(v_j))v_i)\otimes
v_j^*\otimes v_i^* - v_i^*\otimes T(l(T(v_i))v_j) \otimes v_j^* +
v_i^*\otimes v_j^*\otimes [T(v_i), T(v_j)]).$$
So $r$ is a symmetric solution of $S$-equation in the pre-Lie
algebra $A\ltimes_{l^*-r^*, -r^*} V^*$ if and only if $T$ is an
$\mathcal {O}$-operator of $(A, \circ)$ associated to $(l, r, V )$.
\end{proof}

Combining Theorem 2.12 and Theorem 2.13, we have the following
conclusion:

\begin{coro} Let $(A,\circ)$ be a pre-Lie algebra and $(l, r, V)$ be a module.
  Set $\hat
A=A\ltimes_{l^*-r^*,-r^*}V^*$. Let $T:V\rightarrow A$ be a linear
map. Then the following conditions are equivalent:

{\rm (1)}  $T$ is an $\mathcal {O}$-operator of $(A,\circ)$
associated to $(l, r, V )$.

{\rm (2)} $T+\sigma(T)$ is a symmetric solution of $S$-equation in
the pre-Lie algebra $\hat A$.

{\rm (3)} $T+\sigma(T)$ is an $\mathcal {O}$-operator of $\hat A$
associated to $(L_{\hat A}^*-R_{\hat A}^*, -R_{\hat A}^*, {\hat
A}^*)$.
\end{coro}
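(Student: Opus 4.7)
The plan is to observe that Corollary 2.14 is a direct packaging of Theorem 2.12 and Theorem 2.13, with the role of the pre-Lie algebra in Theorem 2.12 played by $\hat A$ rather than $A$.

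First I would record the equivalence (1)$\Leftrightarrow$(2), which is precisely the content of Theorem 2.13: under the identification of $T\colon V\to A$ with an element $T\in A\otimes V^*\subset \hat A\otimes \hat A$, the element $r=T+\sigma(T)\in \hat A\otimes \hat A$ is manifestly symmetric, and Theorem 2.13 asserts that $r$ is a solution of the $S$-equation in $\hat A$ exactly when $T$ is an $\mathcal O$-operator of $(A,\circ)$ associated to $(l,r,V)$.

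Next I would establish (2)$\Leftrightarrow$(3) by applying Theorem 2.12 to the pre-Lie algebra $\hat A$ itself, rather than to $A$. The element $T+\sigma(T)\in \hat A\otimes \hat A$ is symmetric by construction, so Theorem 2.12 is applicable; it yields that $T+\sigma(T)$ is a symmetric solution of the $S$-equation in $\hat A$ if and only if it is an $\mathcal O$-operator of $\hat A$ associated to the module $(L_{\hat A}^*-R_{\hat A}^*,-R_{\hat A}^*,\hat A^*)$. Chaining the two equivalences completes the proof.

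There is no real obstacle here beyond bookkeeping: the only point worth a line of verification is that $T+\sigma(T)$ genuinely lies in $\hat A\otimes \hat A$ (since $T\in A\otimes V^*$ and $\sigma(T)\in V^*\otimes A$, both of which sit inside $(A\oplus V^*)\otimes (A\oplus V^*)$) and that it is symmetric, so that Theorem 2.12 applies verbatim with $A$ replaced by $\hat A$. Accordingly, the proof will be essentially a one-line combination: ``By Theorem 2.13, (1)$\Leftrightarrow$(2); applying Theorem 2.12 to the pre-Lie algebra $\hat A$ and the symmetric element $T+\sigma(T)$, (2)$\Leftrightarrow$(3).''
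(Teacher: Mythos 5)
Your proposal is correct and is exactly the paper's argument: the paper gives no separate proof, stating only that the corollary follows by ``Combining Theorem 2.12 and Theorem 2.13,'' which is precisely your chaining of (1)$\Leftrightarrow$(2) via Theorem 2.13 and (2)$\Leftrightarrow$(3) via Theorem 2.12 applied to the pre-Lie algebra $\hat A$ with the symmetric element $T+\sigma(T)$. Your extra remark verifying that $T+\sigma(T)$ lies in $\hat A\otimes\hat A$ and is symmetric is sound bookkeeping, though the paper leaves it implicit.
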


\section{L-dendriform algebras}
\setcounter{equation}{0}
\renewcommand{\theequation}
{3.\arabic{equation}}

\subsection{Definition and some basic properties}

\begin{defn}{\rm Let $A$ be a vector space with two
binary operations denoted by $\triangleright$ and $ \triangleleft:
A\otimes A \rightarrow A$.  $(A, \triangleright, \triangleleft)$ is
called an {\it L-dendriform algebra} if for any $x,y,z\in A$,
\begin{equation}
x\triangleright(y\triangleright z)=(x\triangleright y)\triangleright
z + (x\triangleleft y)\triangleright z +
y\triangleright(x\triangleright z) - (y\triangleleft
x)\triangleright z - (y\triangleright x)\triangleright z,
\end{equation}
\begin{equation}
x\triangleright(y\triangleleft z)=(x\triangleright y)\triangleleft z
+ y\triangleleft (x\triangleright z) + y\triangleleft(x\triangleleft
z) - (y\triangleleft x)\triangleleft z.
\end{equation}}\end{defn}

\begin{prop} Let $(A,\triangleright,
\triangleleft)$ be an L-dendriform algebra.

{\rm (1)} The binary operation $\bullet:A\otimes A\rightarrow A$
given by
\begin{equation}
x\bullet y=x\triangleright y + x \triangleleft y,\forall x, y \in A,
\end{equation}
defines a pre-Lie algebra. $(A,\bullet)$ is called the associated
horizontal pre-Lie algebra of $(A,\triangleright, \triangleleft)$
and $(A,\triangleright, \triangleleft)$ is called a compatible
L-dendriform algebra structure on the pre-Lie algebra $(A,\bullet)$.

{\rm (2)} The binary operation $\circ:A\otimes A\rightarrow A$ given
by
\begin{equation}
x\circ y=x\triangleright y - y \triangleleft x,\forall x, y \in A,
\end{equation}
defines a pre-Lie algebra. $(A,\circ)$ is called the associated
vertical pre-Lie algebra of $(A,\triangleright, \triangleleft)$ and
$(A,\triangleright, \triangleleft)$ is called a compatible
L-dendriform algebra structure on the pre-Lie algebra $(A,\circ)$.

{\rm (3)} Both $(A,\bullet)$ and $(A,\circ)$ have the same
sub-adjacent Lie algebra $\frak g(A)$ defined by
\begin{equation}
[x,y]=x\triangleright y + x \triangleleft y - y\triangleright x - y
\triangleleft x, \forall x,y\in A.
\end{equation}
\end{prop}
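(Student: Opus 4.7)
The plan is to reformulate the pre-Lie identity as the statement that left multiplication forms a representation of the putative sub-adjacent Lie algebra, and then verify this representation property directly from the two L-dendriform axioms (3.1) and (3.2). Part (3) will then fall out from a one-line comparison of commutators.

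I start with (3), since the calculation is immediate and motivates the rest. A direct expansion gives
\[ x\bullet y-y\bullet x \;=\; x\triangleright y+x\triangleleft y-y\triangleright x-y\triangleleft x, \]
while
\[ x\circ y-y\circ x \;=\; (x\triangleright y-y\triangleleft x)-(y\triangleright x-x\triangleleft y) \;=\; x\triangleright y+x\triangleleft y-y\triangleright x-y\triangleleft x. \]
Thus the commutators of $\bullet$ and $\circ$ agree with the bracket in (3.5), and once (1) and (2) are established Proposition 2.2(1) guarantees that this common bracket is a Lie bracket.

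For (1) I use the equivalence between the pre-Lie axiom for $\bullet$ and the identity
\[ x\bullet(y\bullet z)-y\bullet(x\bullet z)=[x,y]\bullet z, \]
expand the left side into the eight terms of the form $x\diamond_1(y\diamond_2 z)-y\diamond_1(x\diamond_2 z)$ with $\diamond_i\in\{\triangleright,\triangleleft\}$, apply axiom (3.1) to rewrite the $(\triangleright,\triangleright)$ pair as $[x,y]\triangleright z$, and then subtract (3.2) for $(y,x,z)$ from (3.2) for $(x,y,z)$ to collapse the remaining six terms into $[x,y]\triangleleft z$. Summing yields $[x,y]\bullet z$, as required. Part (2) follows the same template: expanding $x\circ(y\circ z)-y\circ(x\circ z)$ via $u\circ v=u\triangleright v-v\triangleleft u$ gives one $(\triangleright,\triangleright)$ block, handled again by (3.1), plus a batch involving $x\triangleright(z\triangleleft y)$, $(y\triangleright z)\triangleleft x$, $(z\triangleleft y)\triangleleft x$ and their $x\leftrightarrow y$ counterparts; applying axiom (3.2) to the triples $(x,z,y)$ and $(y,z,x)$ and subtracting collapses this batch to $-z\triangleleft[x,y]$, yielding $[x,y]\triangleright z-z\triangleleft[x,y]=[x,y]\circ z$.

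The calculation is routine bookkeeping; the only subtle point I anticipate is recognising that for part (2) axiom (3.2) must be invoked with the ``inside'' $\triangleleft$-variable playing the role of the third slot, i.e.\ with the arguments cyclically permuted rather than used as written. Once that reformulation is in place the six ``mixed'' terms cancel cleanly against the re-expressed right-hand sides, and both pre-Lie identities follow without further ingredients.
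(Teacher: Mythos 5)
Your proposal is correct: the reformulation of the pre-Lie identity as $x\ast(y\ast z)-y\ast(x\ast z)=[x,y]\ast z$, the use of (3.1) (in the form (3.6)) for the $(\triangleright,\triangleright)$ block, the subtraction of (3.2) at $(x,y,z)$ and $(y,x,z)$ to produce $[x,y]\triangleleft z$ for part (1), the subtraction of (3.2) at $(x,z,y)$ and $(y,z,x)$ to produce $-z\triangleleft[x,y]$ for part (2), and the commutator comparison for part (3) all check out. The paper dismisses this proposition with ``It is straightforward,'' and your computation is exactly the direct verification it has in mind, so you take essentially the same (indeed the only natural) approach.
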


\begin{proof}
It is straightforward. \end{proof}

\begin{remark}{\rm Let $(A,\triangleright,
\triangleleft)$ be an L-dendriform algebra. Then Eqs. (3.1) and
(3.2) can be rewritten as (for any $x,y,z\in A$)
\begin{equation}
x\triangleright(y\triangleright z)-(x\bullet y)\triangleright z =
y\triangleright(x\triangleright z) - (y\bullet x)\triangleright z,
\end{equation}
\begin{equation}
x\triangleright(y\triangleleft z) -(x\triangleright y)\triangleleft
z= y\triangleleft (x\bullet z)-(y\triangleleft x)\triangleleft z.
\end{equation}
The both sides of the above two equations can be regarded as a kind
of ``generalized associators". In this sense, Eqs. (3.6) and (3.7)
express certain ``generalized left-symmetry" of the ``generalized
associators".}\end{remark}

\begin{prop} Let $A$ be a vector space with two
binary operations denoted by $\triangleright, \triangleleft:
A\otimes A \rightarrow A$.

{\rm (1)} $(A, \triangleright, \triangleleft)$ is an L-dendriform
algebra if and only if $(A,\bullet)$ defined by Eq. (3.3) is a
pre-Lie algebra and $(L_\triangleright, R_\triangleleft, A)$ is a
module.

{\rm (2)} $(A, \triangleright, \triangleleft)$ is an L-dendriform
algebra if and only if $(A,\circ)$ defined by Eq. (3.4) is a pre-Lie
algebra and $(L_\triangleright, -L_\triangleleft, A)$ is a module.
\end{prop}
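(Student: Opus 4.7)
The plan is to prove both parts by a direct translation: rewrite the L-dendriform axioms (3.1)--(3.2) in operator form (essentially Eqs.~(3.6)--(3.7) and their analogues) and recognize them, term for term, as the module axioms (2.5)--(2.6) from Definition~2.4 with appropriate choices of $l$ and $r$. Since Proposition~3.2 already provides the pre-Lie structures on $(A,\bullet)$ and $(A,\circ)$ under the assumption that $(A,\triangleright,\triangleleft)$ is L-dendriform, the forward direction of each equivalence reduces to this bookkeeping identification, and the backward direction runs the identification in reverse.

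For Part~(1), I would take $l=L_\triangleright$ and $r=R_\triangleleft$ acting on $(A,\bullet)$. Applying Eq.~(2.5) to $z$ gives precisely $x\triangleright(y\triangleright z)-(x\bullet y)\triangleright z = y\triangleright(x\triangleright z)-(y\bullet x)\triangleright z$, which is Eq.~(3.6) and hence equivalent to Eq.~(3.1). Similarly, applying Eq.~(2.6) to $z$ produces $x\triangleright(z\triangleleft y)-(x\triangleright z)\triangleleft y=z\triangleleft(x\bullet y)-(z\triangleleft x)\triangleleft y$, which after relabeling $y\leftrightarrow z$ is exactly Eq.~(3.7), equivalent to Eq.~(3.2). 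Since neither direction requires any identity beyond these rearrangements, the equivalence follows.

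For Part~(2), I would take $l=L_\triangleright$ and $r=-L_\triangleleft$ acting on $(A,\circ)$ with $x\circ y=x\triangleright y-y\triangleleft x$. Expanding $x\circ y$ and $y\circ x$ inside Eq.~(2.5) applied to $z$ yields, after regrouping, exactly Eq.~(3.1). Expanding $x\circ y$ inside Eq.~(2.6) applied to $z$ (where the minus sign in $r=-L_\triangleleft$ cancels against the minus sign produced by the substitution $x\circ y=x\triangleright y-y\triangleleft x$) collapses to Eq.~(3.2). As in Part~(1), Proposition~3.2(2) guarantees that $(A,\circ)$ is pre-Lie in the forward direction, and the identifications are bidirectional so the backward direction is immediate.

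The calculations are mechanical; the only place where care is needed is Part~(2), where the minus sign in $r=-L_\triangleleft$ combines with the asymmetric definition $x\circ y=x\triangleright y-y\triangleleft x$ and one must keep track of which terms land on which side. I expect the verification of Eq.~(2.6) in Part~(2) to be the one piece worth writing out line by line, while the other three identifications can be stated by parallel reasoning.
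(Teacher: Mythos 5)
Your proposal is correct and coincides with the paper's own route: the paper dismisses this proposition as ``a straightforward computation or a similar proof as of Theorem 3.8,'' and your argument is precisely that straightforward computation --- applying Eqs.~(2.5)--(2.6) to $z$ with $(l,r)=(L_\triangleright,R_\triangleleft)$ acting on $(A,\bullet)$, respectively $(l,r)=(L_\triangleright,-L_\triangleleft)$ acting on $(A,\circ)$, reproduces Eqs.~(3.6)--(3.7), equivalently (3.1)--(3.2), term for term, and your sign bookkeeping in Part~(2) checks out (expanding $(x\circ y)\triangleleft z=(x\triangleright y)\triangleleft z-(y\triangleleft x)\triangleleft z$ in (2.6) with $r=-L_\triangleleft$ yields exactly Eq.~(3.2)). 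Invoking Proposition~3.2 for the pre-Lie property in the forward direction is legitimate, as it precedes this statement in the paper; in the backward direction, as you note, the module identities alone already give the L-dendriform axioms.
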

\begin{proof}
The conclusions can be obtained by a straightforward computation or
a similar proof as of Theorem 3.8.
\end{proof}

\begin{coro}
Let $(A,\triangleright, \triangleleft)$ be an L-dendriform algebra.
Then $(L_\triangleright^* - R_\triangleleft^*,-
R_\triangleleft^*,A^*)$ is a module of the associated horizontal
pre-Lie algebra  $(A, \bullet)$ and $(L_\triangleright^* +
L_\triangleleft^*,\ L_\triangleleft^*,A^*)$ is a module of the
associated vertical pre-Lie algebra $(A, \circ)$.
\end{coro}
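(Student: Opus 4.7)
The plan is to obtain this corollary as an immediate formal consequence of the two results already established in the excerpt, namely Proposition 3.4 (which identifies two natural modules on $A$ attached to the L-dendriform structure) and Proposition 2.5 (the dual-module construction for pre-Lie algebras). No fresh computation with Eqs.~(3.1)--(3.2) should be necessary; everything is repackaging.

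First I would invoke Proposition 3.4(1): since $(A,\triangleright,\triangleleft)$ is an L-dendriform algebra, $(A,\bullet)$ is a pre-Lie algebra and $(L_\triangleright, R_\triangleleft, A)$ is a module of $(A,\bullet)$. Applying Proposition 2.5 to this module with $l = L_\triangleright$ and $r = R_\triangleleft$ produces the dual module $(l^* - r^*, -r^*, A^*)$, which is precisely $(L_\triangleright^* - R_\triangleleft^*, -R_\triangleleft^*, A^*)$. This is exactly the first claim.

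For the second claim I would use Proposition 3.4(2), which gives that $(A,\circ)$ is a pre-Lie algebra and $(L_\triangleright, -L_\triangleleft, A)$ is a module of $(A,\circ)$. Applying Proposition 2.5 again, this time with $l = L_\triangleright$ and $r = -L_\triangleleft$, yields the dual module $(l^* - r^*, -r^*, A^*) = (L_\triangleright^* - (-L_\triangleleft)^*, -(-L_\triangleleft)^*, A^*) = (L_\triangleright^* + L_\triangleleft^*,\, L_\triangleleft^*, A^*)$, which matches the statement.

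There is essentially no obstacle: the only thing to watch is the sign bookkeeping in the vertical case, where the ``$r$'' of the input module is $-L_\triangleleft$, so that the two minus signs in Proposition 2.5 become plus signs in both slots of the dual module. Once this is noted, the corollary is proved.
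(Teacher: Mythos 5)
Your proposal is correct and is exactly the paper's own argument: the paper proves this corollary by citing Proposition 2.5 and Proposition 3.4, which is precisely your two-step repackaging. Your sign bookkeeping in the vertical case (with $r=-L_\triangleleft$ turning the two minus signs of Proposition 2.5 into plus signs) is also correct.
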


\begin{proof}
It follows from Proposition 2.5 and Proposition 3.4.
\end{proof}

\begin{prop} Let $(A,\triangleright, \triangleleft)$ be an L-dendriform
algebra. Define two binary operations $\triangleright^t,
\triangleleft^t: A\otimes A\rightarrow A$ by
\begin{equation}
x\triangleright^t y=x\triangleright y,\;\; x\triangleleft^t
y=-y\triangleleft x,\;\;\forall x,y\in A.
\end{equation}
Then $(A,\triangleright^t, \triangleleft^t)$ is an L-dendriform
algebra. The associated horizontal pre-Lie algebra of
$(A,\triangleright^t$, $\triangleleft^t)$ is the associated vertical
pre-Lie algebra $(A,\circ)$ of $(A,\triangleright, \triangleleft)$
and the associated vertical pre-Lie algebra of $(A,\triangleright^t,
\triangleleft^t)$ is the associated horizontal pre-Lie algebra
$(A,\bullet)$ of $(A,\triangleright, \triangleleft)$, that is,
\begin{equation}
\bullet^t=\circ,\;\;\;\circ^t=\bullet.
\end{equation}\end{prop}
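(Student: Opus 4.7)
The plan is to exploit the two-sided characterization of L-dendriform algebras given in Proposition 3.4, which shows that an L-dendriform structure is controlled simultaneously by the horizontal pre-Lie algebra together with the module $(L_\triangleright, R_\triangleleft, A)$, and by the vertical pre-Lie algebra together with the module $(L_\triangleright, -L_\triangleleft, A)$. The transpose construction essentially swaps these two viewpoints, so the verification reduces to matching operators on both sides.

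First I would verify the identities $\bullet^t=\circ$ and $\circ^t=\bullet$ directly from the definitions:
\[
x\bullet^t y=x\triangleright^t y+x\triangleleft^t y=x\triangleright y-y\triangleleft x=x\circ y,
\]
\[
x\circ^t y=x\triangleright^t y-y\triangleleft^t x=x\triangleright y+x\triangleleft y=x\bullet y.
\]
This immediately yields Eq.~(3.9), and in particular tells us that $(A,\bullet^t)=(A,\circ)$ is already known to be a pre-Lie algebra by Proposition~3.2(2) applied to $(A,\triangleright,\triangleleft)$.

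Next I would translate the multiplication operators of $\triangleright^t,\triangleleft^t$ back to those of $\triangleright,\triangleleft$. From $x\triangleright^t y=x\triangleright y$ one gets $L_{\triangleright^t}=L_\triangleright$, and from $x\triangleleft^t y=-y\triangleleft x$ one gets $R_{\triangleleft^t}(x)y=y\triangleleft^t x=-x\triangleleft y=-L_\triangleleft(x)y$, hence $R_{\triangleleft^t}=-L_\triangleleft$. To show $(A,\triangleright^t,\triangleleft^t)$ is an L-dendriform algebra, I would invoke Proposition~3.4(1): it suffices to check that $(A,\bullet^t)$ is pre-Lie and that $(L_{\triangleright^t},R_{\triangleleft^t},A)=(L_\triangleright,-L_\triangleleft,A)$ is a module over it. But $\bullet^t=\circ$, and by Proposition~3.4(2) applied to the original L-dendriform algebra $(A,\triangleright,\triangleleft)$, the pair $(L_\triangleright,-L_\triangleleft,A)$ is precisely a module of $(A,\circ)$. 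This is exactly what is needed, so $(A,\triangleright^t,\triangleleft^t)$ is L-dendriform.

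There is no real obstacle here; the proof is essentially a bookkeeping exercise once one recognizes that the transpose exchanges the roles played by horizontal and vertical pre-Lie algebras in Proposition~3.4. The only point requiring minimal care is the sign handling in $R_{\triangleleft^t}=-L_\triangleleft$, which is what makes the module conditions of part~(1) of Proposition~3.4 for the transposed structure coincide with those of part~(2) for the original structure. An alternative, longer route would be to verify Eqs.~(3.1) and (3.2) for $\triangleright^t,\triangleleft^t$ by direct substitution, but this would just reproduce the content of Proposition~3.4 without using it.
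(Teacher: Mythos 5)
Your proposal is correct. The paper itself gives no argument for Proposition 3.6 beyond the words ``it is straightforward,'' which presumably means direct substitution of $\triangleright^t,\triangleleft^t$ into Eqs.~(3.1) and (3.2); your route is a genuinely different and cleaner one. The two displayed computations establishing $\bullet^t=\circ$ and $\circ^t=\bullet$ are right, and the operator translation $L_{\triangleright^t}=L_\triangleright$, $R_{\triangleleft^t}=-L_\triangleleft$ (note $R_{\triangleleft^t}(x)y=y\triangleleft^t x=-x\triangleleft y$, consistent with the paper's convention $L_*(x)y=R_*(y)x=x*y$) is exactly the pivot that makes the argument work: Proposition~3.4(1) applied to the transposed operations asks that $(A,\bullet^t)=(A,\circ)$ be pre-Lie and that $(L_{\triangleright^t},R_{\triangleleft^t},A)=(L_\triangleright,-L_\triangleleft,A)$ be a module over it, and both facts are supplied by Proposition~3.2(2) and the ``only if'' direction of Proposition~3.4(2) applied to the original algebra. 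There is no circularity, since Proposition~3.4 precedes Proposition~3.6 and does not depend on it. What your approach buys is conceptual transparency -- it exhibits the transpose as precisely the operation exchanging the two characterizations in Proposition~3.4, rather than as a sign-chasing coincidence -- at the cost of relying on Proposition~3.4, whose own proof the paper also leaves as a computation; the brute-force verification of Eqs.~(3.1)--(3.2), which you correctly identify as the alternative, is self-contained but unilluminating.
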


\begin{proof}
It is straightforward.\end{proof}

\begin{defn} {\rm Let $(A,\triangleright, \triangleleft)$ be an L-dendriform
algebra. The L-dendriform algebra $(A,\triangleright^t,
\triangleleft^t)$ given by Eq. (3.8) is called  the {\it transpose}
of $(A,\triangleright, \triangleleft)$.}
\end{defn}

For brevity, in the following (sub)sections, we only give the study
involving the associated vertical pre-Lie algebras. The
corresponding study on the associated horizontal pre-Lie algebras
can be obtained by the transposes of the L-dendriform algebras
through Proposition 3.6.

\subsection{L-dendriform algebras and $\mathcal O$-operators of
pre-Lie algebras}

\begin{theorem} Let $(A, \circ)$ be a pre-Lie algebra and $(l, r, V)$ be a
module. If $T$ is an $\mathcal {O}$-operator associated to $(l, r,
V)$, then there exists an L-dendriform algebra structure on $V$
defined by
\begin{equation}
u \triangleright v = l(T(u))v,u \triangleleft v = -r(T(u))v, \forall
u, v\in V.
\end{equation}
Therefore there is a pre-Lie algebra structure on $V$ defined by Eq.
(3.4) as the associated vertical pre-Lie algebra of $(V,
\triangleright, \triangleleft)$ and $T$ is a homomorphism of pre-Lie
algebras. Furthermore, $T(V) = \{T(v)\mid v\in V\}\subset A$ is a
pre-Lie subalgebra of $(A,\circ)$ and there is an induced
L-dendriform algebra structure on $T(V)$ given by
\begin{equation}
T(u) \triangleright T(v) = T(u \triangleright v),T(u) \triangleleft
T(v) = T(u \triangleleft v), \forall u, v \in V.
\end{equation}
Moreover, the corresponding associated vertical pre-Lie algebra
structure on $T(V)$  is a pre-Lie subalgebra of $(A, \circ)$ and $T$
is a homomorphism of L-dendriform algebras.
\end{theorem}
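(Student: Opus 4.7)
The plan is to verify the two L-dendriform axioms (3.1) and (3.2) directly, substituting $u\triangleright v = l(T(u))v$ and $u\triangleleft v = -r(T(u))v$, and then to harvest the remaining claims as formal consequences. First I would record the key translation of the $\mathcal{O}$-operator condition: for any $u,v\in V$,
\[ T(l(T(u))v)+T(r(T(v))u)=T(u)\circ T(v), \]
which, combined with its symmetric counterpart, yields the crucial bridge
\[ T\bigl((l-r)(T(u))v\bigr)-T\bigl((l-r)(T(v))u\bigr)=[T(u),T(v)] \]
between the ``internal'' operations defined through $T$ and the Lie bracket on $A$.

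For axiom (3.1) I would work with the equivalent form (3.6). Both sides become $l(T(\cdot))$ applied to $w$; after cancellation, the identity reduces to
\[ l(T(u))l(T(v))w-l(T(v))l(T(u))w=l\bigl(T((l-r)(T(u))v)-T((l-r)(T(v))u)\bigr)w, \]
and the bridge identity turns the right-hand side into $l([T(u),T(v)])w$. This is precisely the module axiom (2.5) evaluated at $x=T(u)$, $y=T(v)$, which reads $l(x)l(y)-l(y)l(x)=l([x,y])$. For axiom (3.2) I would expand (3.7), collect the terms of shape $r(T(\cdot))w$ using the $\mathcal{O}$-operator identity, and reduce everything to
\[ l(T(u))r(T(v))-r(T(v))l(T(u))=r(T(u)\circ T(v))-r(T(v))r(T(u)), \]
which is exactly the module axiom (2.6) at $x=T(u)$, $y=T(v)$. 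The main obstacle is bookkeeping the signs in (3.2): the minus sign in the definition of $\triangleleft$ gets pulled through $T$ when expanding $(v\triangleleft u)\triangleleft w$, and a careful sign audit before invoking (2.6) is the only real care required.

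Once (3.1)--(3.2) are in hand, the associated vertical pre-Lie product on $V$ is $u\circ_V v=u\triangleright v-v\triangleleft u=l(T(u))v+r(T(v))u$, so the $\mathcal{O}$-operator condition reads $T(u\circ_V v)=T(u)\circ T(v)$; hence $T$ is a pre-Lie homomorphism and $T(V)$ is a pre-Lie subalgebra of $(A,\circ)$. To transfer the L-dendriform structure to $T(V)$ through (3.11), I would check well-definedness: if $T(u_1)=T(u_2)$ and $T(v_1)=T(v_2)$, then $l(T(u_1))v=l(T(u_2))v$ handles the first argument, while the identity $T(l(T(u))v_i)=T(u)\circ T(v_i)-T(r(T(v_i))u)$ shows that $T(u\triangleright v)$ depends on $v$ only through $T(v)$, and similarly for $T(u\triangleleft v)$. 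The L-dendriform axioms on $T(V)$ are then obtained by applying $T$ to those on $V$, $T$ is an L-dendriform homomorphism by construction, and the vertical pre-Lie multiplication computed on $T(V)$ via (3.4) equals $T(u)\circ T(v)$, confirming that it coincides with the restriction of $(A,\circ)$.
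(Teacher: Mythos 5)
Your proposal is correct and follows essentially the same route as the paper: both reduce the two L-dendriform axioms, after substituting $u\triangleright v=l(T(u))v$ and $u\triangleleft v=-r(T(u))v$ and invoking the $\mathcal{O}$-operator identity $T(l(T(u))v+r(T(v))u)=T(u)\circ T(v)$, to the module conditions (2.5) and (2.6) evaluated at $x=T(u)$, $y=T(v)$. Your only departures are organizational: you route the computation through the equivalent identities (3.6)--(3.7) and the bracket identity $T((l-r)(T(u))v)-T((l-r)(T(v))u)=[T(u),T(v)]$, and you spell out the well-definedness of the induced structure on $T(V)$ when $T$ is not injective, a point the paper subsumes under ``the other conclusions follow easily.''
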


\begin{proof}
For any $u, v, w \in V$, we have {\small \begin{eqnarray*} u
\triangleright (v \triangleright w) &=& l(T(u))l(T(v))w,(u
\triangleright v) \triangleright w = l(T(l(T(u))v))w, (u
\triangleleft v) \triangleright w = -l(T(r(T(u))v))w,\\
v\triangleright (u \triangleright w) &=& l(T(v))l(T(u))w, (v
\triangleleft u) \triangleright w = -l(T(r(T(v))u))w, (v
\triangleright u) \triangleright w = l(T(l(T(v))u))w.\\
(u \triangleright v) \triangleleft w&=& -r(T(l(T(u))v))w, u
\triangleright (v \triangleleft w) = -l(T(u))r(T(v))w,  v
\triangleleft (u \triangleright w) = -r(T(v))l(T(u))w,\\
v \triangleleft (u \triangleleft w) &=& r(T(v))r(T(u))w,(v
\triangleleft u) \triangleleft w = r(T(r(T(v))u))w.
\end{eqnarray*}}
Hence
\begin{eqnarray*}
&&(u \triangleright v) \triangleright w + (u \triangleleft v)
\triangleright w + v\triangleright (u \triangleright w) - (v
\triangleleft u) \triangleright w - (v \triangleright u)
\triangleright w - u \triangleright (v \triangleright w)\\
&&\hspace{0.5cm}= l(T(u))l(T(v))w-l(T(v))l(T(u))w-l(T(u)\circ
T(v))T(v)w +l(T(v)\circ T(u))w=0,\\
&& (u \triangleright v) \triangleleft w + v\triangleleft (u
\triangleright w) + v \triangleleft (u \triangleleft w) - (v
\triangleleft u) \triangleleft w - u \triangleright (v \triangleleft
w)\\
&&\hspace{0.5cm} = -r(T(l(T(u))v))w + r(T(u)\circ T(v))w -
r(T(r(T(v))u))w=0.
\end{eqnarray*}
Therefore $(V, \triangleright, \triangleleft)$ is an L-dendriform
algebra. The other conclusions follow easily.
\end{proof}

\begin{coro} Let $(A,\circ)$ be a pre-Lie algebra and $R$ be a Rota-Baxter
operator of weight zero. Then the binary operations given by
\begin{equation}
x \triangleright y = R(x)\circ y,x \triangleleft y = -y\circ R(x),
\forall x, y \in A,
\end{equation}
define an L-dendriform algebra structure on $A$.
\end{coro}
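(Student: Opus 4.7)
The plan is to derive this corollary as an immediate specialization of Theorem 3.8, using the identification established in Example 2.11 that a Rota-Baxter operator of weight zero on $(A,\circ)$ is nothing other than an $\mathcal{O}$-operator associated to the regular module $(L_\circ, R_\circ, A)$. No fresh computation should be required; the argument is a pure ``plug-in''.

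Concretely, I would first invoke Example 2.11 to recast the hypothesis: since $R:A\to A$ satisfies $R(x)\circ R(y)=R(R(x)\circ y+x\circ R(y))$, it is an $\mathcal{O}$-operator of $(A,\circ)$ associated to $(L_\circ,R_\circ,A)$. Next I would apply Theorem 3.8 with $V=A$, $l=L_\circ$, $r=R_\circ$, and $T=R$. The theorem then produces an L-dendriform algebra structure on $V=A$ given by
\begin{equation*}
x\triangleright y=l(T(x))y=L_\circ(R(x))y=R(x)\circ y,\qquad
x\triangleleft y=-r(T(x))y=-R_\circ(R(x))y=-y\circ R(x),
\end{equation*}
which are exactly the operations (3.12) in the statement. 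Hence $(A,\triangleright,\triangleleft)$ is an L-dendriform algebra.

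Since the whole argument is a one-line invocation, there is essentially no obstacle; the only minor point to check is the bookkeeping of signs and of the order of arguments between the left/right multiplication operators $L_\circ,R_\circ$ of $(A,\circ)$ and the L-dendriform operations $\triangleright,\triangleleft$, but this is just a matter of matching definitions. If desired, one could alternatively verify the two axioms (3.1) and (3.2) directly from the Rota-Baxter identity, but this would simply reproduce the computation already carried out in the proof of Theorem 3.8 in the special case $l=L_\circ$, $r=R_\circ$, $T=R$, so the appeal to Theorem 3.8 is the cleanest route.
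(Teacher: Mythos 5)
Your proposal is correct and is exactly the paper's own argument: the paper proves this corollary by the same one-line specialization of Theorem 3.8 with $V=A$, $l=L_\circ$, $r=R_\circ$, $T=R$, using the observation of Example 2.11 that a Rota--Baxter operator of weight zero is an $\mathcal{O}$-operator associated to $(L_\circ, R_\circ, A)$. Your sign and argument-order bookkeeping, giving $x\triangleright y=R(x)\circ y$ and $x\triangleleft y=-y\circ R(x)$, matches the paper's formulas precisely.
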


\begin{proof}
It follows immediately from Theorem 3.8 by taking $V=A$, $l=L$ and
$r=R$.
\end{proof}

Recall that a linear map $T:V \rightarrow \frak g $ is called an
{\it $\mathcal {O}$-operator of a Lie algebra $\frak g$ associated
to a representation $(\rho, V)$} if $T$ satisfies
\begin{equation}
[T(u), T(v)] = T(\rho(T(u))v - \rho(T(v))u),\forall u,v \in V.
\end{equation}
In particular, if $R$ is an $\mathcal {O}$-operator of $\frak g $
associated to the representation $({\rm {\rm ad} }, \frak g )$, it
is known (\cite{GS}) that there exists a pre-Lie algebra structure
on $\frak g$ given by
\begin{equation}
x\circ y = [R(x), y], \forall x ,y \in \frak g.
\end{equation}

\begin{coro} Let $\frak g $ be a Lie algebra and
 $\{R_1,R_2\}$ be a pair of commuting $\mathcal {O}$-operators
of $\frak g $ associated to  $({\rm ad}, \frak g )$. Then there
exists an L-dendriform algebra structure on $\frak g $ defined by
\begin{equation}
x\triangleright y = [R_1(R_2(x)), y],x\triangleleft y = [R_2(x),
R_1(y)],\forall x, y \in \frak g .
\end{equation}
\end{coro}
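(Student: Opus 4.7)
The plan is to derive this from Corollary~3.9 by exhibiting the desired L-dendriform algebra as the one attached to a Rota--Baxter operator on a suitable pre-Lie algebra. The first operator $R_1$ plays the role of building the pre-Lie algebra, and the second operator $R_2$ plays the role of the Rota--Baxter operator on it.

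More precisely, since $R_1$ is an $\mathcal{O}$-operator of $\frak g$ associated to $({\rm ad},\frak g)$, Eq.~(3.14) equips $\frak g$ with a pre-Lie algebra structure $(\frak g,\circ)$ defined by $x\circ y=[R_1(x),y]$. The crucial step is to verify that $R_2$ is then a Rota--Baxter operator of weight zero on $(\frak g,\circ)$, i.e.,
$$R_2(x)\circ R_2(y)=R_2\bigl(R_2(x)\circ y + x\circ R_2(y)\bigr),\quad\forall x,y\in\frak g.$$
Expanding $\circ$, the left-hand side equals $[R_1R_2(x),R_2(y)]$, while the right-hand side equals $R_2\bigl([R_1R_2(x),y]+[R_1(x),R_2(y)]\bigr)$. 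Using the commutation $R_1R_2=R_2R_1$, the inner argument on the right becomes $[R_2(R_1(x)),y]+[R_1(x),R_2(y)]$, which is precisely the input pattern $[R_2(a),b]+[a,R_2(b)]$ for $R_2$ with $a=R_1(x)$, $b=y$. Applying the $\mathcal{O}$-operator identity (Eq.~(3.13)) for $R_2$ then yields $[R_2R_1(x),R_2(y)]=[R_1R_2(x),R_2(y)]$, matching the left-hand side.

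Once $R_2$ is identified as a Rota--Baxter operator on $(\frak g,\circ)$, Corollary~3.9 immediately furnishes an L-dendriform algebra structure on $\frak g$ given by
$$x\triangleright y=R_2(x)\circ y=[R_1R_2(x),y],\quad x\triangleleft y=-y\circ R_2(x)=[R_2(x),R_1(y)],$$
which is exactly Eq.~(3.15). The only real obstacle is the Rota--Baxter verification above; that in turn is a short calculation once one notices that the commutation $R_1R_2=R_2R_1$ is precisely what converts the two terms inside $R_2$ into the $\mathcal{O}$-operator input for $R_2$. No further compatibility checks are needed, since Corollary~3.9 (which itself follows from Theorem~3.8 with $V=A$, $l=L_\circ$, $r=R_\circ$) packages all the L-dendriform axioms into that single reduction.
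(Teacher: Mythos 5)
Your proposal is correct and follows exactly the paper's route: form the pre-Lie algebra $x\circ y=[R_1(x),y]$ from $R_1$ via Eq.~(3.14), check that commutativity $R_1R_2=R_2R_1$ makes $R_2$ a Rota--Baxter operator of weight zero on $(\frak g,\circ)$, and conclude by Corollary~3.9. The only difference is that the paper leaves the Rota--Baxter verification as ``straightforward,'' whereas you carry out the computation explicitly (and correctly, using the $\mathcal{O}$-operator identity $[R_2(a),R_2(b)]=R_2\bigl([R_2(a),b]+[a,R_2(b)]\bigr)$ with $a=R_1(x)$, $b=y$).
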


\begin{proof} There exists a pre-Lie algebra
structure on $\frak g$ defined by Eq. (3.14) with the $\mathcal
O$-operator $R_1$ of the Lie algebra $\frak g$ associated to $({\rm
ad} , \frak g )$. It is straightforward to show that $R_2$ is a
Rota-Baxter operator of weight zero on this pre-Lie algebra if $R_2$
as an $\mathcal O$-operator of the Lie algebra $\frak g$ associated
to $({\rm ad}, \frak g )$ is commutative with $R_1$. Then the result
follows from Corollary 3.9.
\end{proof}

\begin{theorem} Let $(A,\circ)$ be a pre-Lie algebra. Then
there exists a compatible L-dendriform algebra structure on
$(A,\circ)$ such that $(A,\circ)$ is the associated vertical pre-Lie
algebra if and only if there exists an invertible $\mathcal
{O}$-operator of $(A,\circ)$.
\end{theorem}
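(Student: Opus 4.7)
The plan is to prove the equivalence by exhibiting concrete constructions in both directions, using Theorem 3.8 and Proposition 3.4(2) as the main engines.

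For the ``only if'' direction, suppose $(A,\triangleright,\triangleleft)$ is a compatible L-dendriform structure on $(A,\circ)$ such that the associated vertical pre-Lie algebra is $(A,\circ)$, i.e.\ $x\circ y = x\triangleright y - y\triangleleft x$. The natural candidate for an invertible $\mathcal{O}$-operator is the identity map $\mathrm{id}:A\to A$. By Proposition 3.4(2), $(L_\triangleright,-L_\triangleleft,A)$ is a module of $(A,\circ)$. I would then just verify directly from Eq.\ (3.4) that $\mathrm{id}(x)\circ \mathrm{id}(y) = x\circ y = L_\triangleright(x)y + (-L_\triangleleft)(y)x = \mathrm{id}\bigl(L_\triangleright(\mathrm{id}(x))y + (-L_\triangleleft)(\mathrm{id}(y))x\bigr)$, so $\mathrm{id}$ is indeed an $\mathcal{O}$-operator associated to the module $(L_\triangleright,-L_\triangleleft,A)$, and it is obviously invertible.

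For the ``if'' direction, assume $T:V\to A$ is an invertible $\mathcal{O}$-operator associated to some module $(l,r,V)$. By Theorem 3.8, $V$ carries an L-dendriform structure $u\triangleright v = l(T(u))v$, $u\triangleleft v = -r(T(u))v$, and the induced L-dendriform structure on $T(V)$ — which equals $A$ because $T$ is invertible — is given by Eq.\ (3.11). I would then transport this structure to $A$ via $T$: define
\begin{equation*}
x\triangleright_A y := T\bigl(l(x)T^{-1}(y)\bigr),\qquad x\triangleleft_A y := -T\bigl(r(x)T^{-1}(y)\bigr),\quad\forall x,y\in A,
\end{equation*}
so that $T$ becomes an isomorphism of L-dendriform algebras. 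It remains to show that the associated vertical pre-Lie algebra of $(A,\triangleright_A,\triangleleft_A)$ coincides with $(A,\circ)$. Using Eq.\ (3.4) and the defining relation of $T$ as an $\mathcal{O}$-operator, I compute, for $u=T^{-1}(x)$ and $v=T^{-1}(y)$,
\begin{equation*}
x\triangleright_A y - y\triangleleft_A x = T(l(T(u))v) + T(r(T(v))u) = T(u)\circ T(v) = x\circ y,
\end{equation*}
which is exactly what we need. Thus $(A,\triangleright_A,\triangleleft_A)$ is a compatible L-dendriform structure on $(A,\circ)$.

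Neither direction presents real difficulty: the ``only if'' direction is essentially an unwinding of the definition of the associated vertical pre-Lie algebra, and the ``if'' direction reduces to combining Theorem 3.8 with the invertibility of $T$ to transport the structure. The only place that requires mild care is keeping track of signs and of which module appears (note the asymmetric sign in $u\triangleleft v = -r(T(u))v$ from Eq.\ (3.10)), so that the identification of $x\circ y$ with $x\triangleright_A y - y\triangleleft_A x$ comes out correctly.
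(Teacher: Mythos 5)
Your proposal is correct and follows essentially the same route as the paper's own proof: the ``if'' direction transports the L-dendriform structure from Theorem 3.8 to $A$ via the invertible $T$ and verifies compatibility through the $\mathcal{O}$-operator identity, and the ``only if'' direction exhibits $\mathrm{id}$ as an invertible $\mathcal{O}$-operator associated to the module $(L_\triangleright,-L_\triangleleft,A)$ from Proposition 3.4(2). All signs and module identifications in your computation check out against the paper's argument.
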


\begin{proof}
Suppose that there exists an invertible ${\mathcal O}$-operator of
$(A,\circ)$ associated to a module $(l,r,V)$. By Theorem 3.8, there
exists an L-dendriform algebra structure on $V$ given by Eq. (3.10).
Therefore we define an L-dendriform algebra structure on $A$ by Eq.
(3.11) such that $T$ is an isomorphism of L-dendriform algebras,
that is,
$$x\triangleright y = T(l(x)T^{-1}(y)),x\triangleleft y = -T(r(x)T^{-1}(y)),\;\;\forall x,y\in A.$$
Moreover it is a compatible L-dendriform algebra structure on
$(A,\circ)$ since
$$x\triangleright y - y\triangleleft x = T(l(x)T^{-1}(y) +
r(y)T^{-1}(x)) = T(T^{-1}(x)\circ T^{-1}(y))=x\circ y,\;\;\forall
x,y\in A.$$ Conversely, let $(A,\triangleright,\triangleleft)$ be an
L-dendriform algebra and $(A,\circ)$ be the associated vertical
pre-Lie algebra. Then $(L_\triangleright,-L_\triangleleft,A)$ is a
module of $(A,\circ)$ and the identity map $id: A\rightarrow A$ is
an $\mathcal {O}$-operator of $(A, \circ)$ associated to
$(L_\triangleright,-L_\triangleleft,A)$. \end{proof}

The following conclusion reveals the relationship between
L-dendriform algebras and pseudo-Hessian structures (that is, the
pre-Lie algebras a nondegenerate symmetric 2-cocycle):

\begin{coro} Let $(A, \circ)$ be a pre-Lie algebra
with a nondegenerate symmetric 2-cocycle $\mathcal {B}$. Then there
exists a compatible L-dendriform algebra structure on $(A,\circ)$
given by
\begin{equation}
\mathcal {B}(x\triangleright y,z)= -\mathcal {B}(y,[x,z]),\;\;
\mathcal {B}(x\triangleleft y,z)= -\mathcal {B}(y,z\circ x), \forall
x, y, z \in A.
\end{equation}
such that $(A,\circ)$ is the associated vertical pre-Lie algebra.
\end{coro}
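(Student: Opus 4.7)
The plan is to reduce the statement to Theorem 3.11 by producing a canonical invertible $\mathcal{O}$-operator from the data of the nondegenerate symmetric 2-cocycle, and then to compute the induced L-dendriform structure explicitly to match the formulas in the statement.

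First I would use the nondegeneracy of $\mathcal{B}$ to define the linear isomorphism $\varphi:A\to A^*$ by $\varphi(x)(y)=\mathcal{B}(x,y)$, and set $T=\varphi^{-1}:A^*\to A$. By Proposition 2.5, the triple $(L_\circ^*-R_\circ^*,\,-R_\circ^*,\,A^*)$ is a module of $(A,\circ)$, so it is a candidate module to which $T$ could be associated as an $\mathcal{O}$-operator.

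Next I would verify that $T$ is indeed an $\mathcal{O}$-operator of $(A,\circ)$ associated to this module. Setting $x=T(u^*)$, $y=T(v^*)$ and applying $\varphi$ to both sides of the defining equation (2.10), the $\mathcal{O}$-operator condition becomes, after pairing against an arbitrary $z\in A$ and using (1.9), the identity
\begin{equation*}
\mathcal{B}(x\circ y,z)=-\mathcal{B}(y,x\circ z)+\mathcal{B}(y,z\circ x)+\mathcal{B}(x,z\circ y),\qquad\forall x,y,z\in A.
\end{equation*}
I expect this to be the main (indeed only nontrivial) calculation. The plan for it is to rewrite the two rightmost terms using the symmetry of $\mathcal{B}$, apply the 2-cocycle identity (2.8) with the roles of $(x,y,z)$ suitably permuted (namely replacing $(x,y,z)$ by $(z,x,y)$), and cancel using symmetry again. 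This should collapse the right-hand side precisely to $\mathcal{B}(x\circ y,z)$.

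Once $T$ is known to be an invertible $\mathcal{O}$-operator, Theorem 3.11 (together with the transport formulas from Theorem 3.8) produces a compatible L-dendriform structure on $(A,\circ)$ given, in the notation of that proof, by
\begin{equation*}
x\triangleright y=\varphi^{-1}\bigl((L_\circ^*(x)-R_\circ^*(x))\varphi(y)\bigr),\qquad x\triangleleft y=\varphi^{-1}\bigl(R_\circ^*(x)\varphi(y)\bigr).
\end{equation*}
Finally I would pair each of these with an arbitrary $z$ under $\mathcal{B}$; unwinding the dual-representation formula (1.9) yields $\mathcal{B}(x\triangleright y,z)=-\mathcal{B}(y,x\circ z)+\mathcal{B}(y,z\circ x)=-\mathcal{B}(y,[x,z])$ and $\mathcal{B}(x\triangleleft y,z)=-\mathcal{B}(y,z\circ x)$, matching the formulas in the statement. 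Compatibility with $(A,\circ)$ as the associated vertical pre-Lie algebra is automatic from Theorem 3.11, but can also be read off directly as $\mathcal{B}(x\triangleright y-y\triangleleft x,z)=-\mathcal{B}(y,x\circ z)+\mathcal{B}(x,z\circ y)=\mathcal{B}(x\circ y,z)$ via the same cocycle manipulation used above.
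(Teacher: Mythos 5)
Your proposal is correct and follows essentially the same route as the paper: the paper likewise takes $T$ defined by Eq.\ (1.8) (your $\varphi^{-1}$) as an invertible $\mathcal O$-operator of $(A,\circ)$ associated to $(L_\circ^*-R_\circ^*,\,-R_\circ^*,\,A^*)$ --- merely declaring that verification ``straightforward'' where you spell out the reduction to the 2-cocycle identity --- and then invokes Theorem 3.11 and unwinds Eq.\ (1.9) to get exactly your pairing formulas. The only blemish is a typo in your final (redundant) compatibility check: $\mathcal B(x\triangleright y-y\triangleleft x,z)=-\mathcal B(y,[x,z])+\mathcal B(x,z\circ y)$, not $-\mathcal B(y,x\circ z)+\mathcal B(x,z\circ y)$, but as you note compatibility is automatic from Theorem 3.11 anyway.
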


\begin{proof}
It is straightforward to show that the invertible linear map
$T:A^*\rightarrow A$ defined by Eq. (1.8) is an invertible $\mathcal
O$-operator of $(A,\circ)$ associated to the module $(L_\circ^* -
R_\circ^*,- R_\circ^*,A^*)$. By Theorem 3.11, there is a compatible
L-dendriform algebra structure on $A$ defined by (for any $x,y,z\in
A$)
\begin{eqnarray*} \mathcal {B}(x\triangleright y, z) &=&
\mathcal {B}(T((L_\circ^* - R_\circ^*)(x)T^{-1}(y)), z) = \langle
(L_\circ^* - R_\circ^*)(x)T^{-1}(y), z\rangle = - \langle T^{-1}(y),
[x, z]
\rangle\\&  =& - \mathcal {B}(y, [x,z]);\\
\mathcal {B}(x\triangleleft y,z)&=&\mathcal
B(T(R_\circ^*(x)T^{-1}(y)),z)= \langle
R_\circ^*(x)T^{-1}(y),z\rangle=-\langle T^{-1}(y), z\circ x\rangle=
 -\mathcal {B}(y,z\circ x).
\end{eqnarray*} such that $(A,\circ)$ is the associated vertical pre-Lie
algebra. Hence the conclusion holds.
\end{proof}

The following conclusion provides a construction of solutions of
$S$-equation in certain pre-Lie algebras from L-dendriform algebras:

\begin{coro}
Let $(A, \triangleright, \triangleleft)$ be an L-dendriform algebra
and $(A, \circ),(A, \bullet)$ be the associated vertical and
horizontal pre-Lie algebras respectively. Let $\{e_1,..., e_n\}$ be
a basis of $A$ and $\{e_1^*,..., e_n^*\}$ be the dual basis. Then $
r = \sum\limits_{i=1}^n (e_i\otimes e_i^* + e_i^*\otimes e_i)$ is a
symmetric solution of $S$-equation in  the pre-Lie algebras
$A\ltimes_{L_\triangleright^* + L_\triangleleft^*,
L_\triangleleft^*}A^*$ and $A\ltimes_{L_\triangleright^* -
R_\triangleleft^*, -R_\triangleleft^*}A^*$.
\end{coro}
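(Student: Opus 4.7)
The plan is to deduce this Corollary by a direct application of Theorem 2.13 (equivalently, Corollary 2.14), observing that on any L-dendriform algebra the identity map $\mathrm{id}_A : A \to A$ is simultaneously an $\mathcal{O}$-operator of the two associated pre-Lie algebras with respect to the natural modules identified in Proposition 3.4.

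First, by Proposition 3.4(2), $(A,\circ)$ is a pre-Lie algebra and $(L_\triangleright, -L_\triangleleft, A)$ is a module over it; directly from the definition of $\circ$ we have
\[
\mathrm{id}(x) \circ \mathrm{id}(y) \;=\; x\triangleright y - y\triangleleft x \;=\; L_\triangleright(\mathrm{id}(x))y \,+\, (-L_\triangleleft)(\mathrm{id}(y))x,
\]
so $\mathrm{id}_A$ is an $\mathcal{O}$-operator of $(A,\circ)$ associated to $(L_\triangleright,-L_\triangleleft,A)$. Similarly, by Proposition 3.4(1), $(A,\bullet)$ is a pre-Lie algebra with module $(L_\triangleright, R_\triangleleft, A)$, and the identity
\[
\mathrm{id}(x)\bullet \mathrm{id}(y) \;=\; x\triangleright y + x\triangleleft y \;=\; L_\triangleright(\mathrm{id}(x))y \,+\, R_\triangleleft(\mathrm{id}(y))x
\]
shows $\mathrm{id}_A$ is also an $\mathcal{O}$-operator of $(A,\bullet)$ associated to $(L_\triangleright,R_\triangleleft,A)$.

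Next, I would invoke the identification used in the proof of Theorem 2.13: a linear map $T:V\to A$ corresponds to the element $\sum_{i} T(v_i)\otimes v_i^{*} \in (A\oplus V^*)\otimes(A\oplus V^*)$. Taking $V=A$, $v_i=e_i$ and $T=\mathrm{id}$, this gives $T = \sum_i e_i\otimes e_i^{*}$, hence
\[
T + \sigma(T) \;=\; \sum_{i=1}^{n}\bigl(e_i\otimes e_i^{*} + e_i^{*}\otimes e_i\bigr) \;=\; r,
\]
which is manifestly symmetric. Applying Theorem 2.13 to the pre-Lie algebra $(A,\bullet)$ with module $(L_\triangleright, R_\triangleleft, A)$ (so $l=L_\triangleright$, $r=R_\triangleleft$, whence $l^*-r^*=L_\triangleright^*-R_\triangleleft^*$ and $-r^*=-R_\triangleleft^*$) yields that $r$ is a symmetric solution of the $S$-equation in $A\ltimes_{L_\triangleright^*-R_\triangleleft^*,\,-R_\triangleleft^*}A^{*}$. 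Applying it instead to $(A,\circ)$ with module $(L_\triangleright,-L_\triangleleft,A)$ (so $l^*-r^*=L_\triangleright^{*}+L_\triangleleft^{*}$ and $-r^*=L_\triangleleft^{*}$) yields the analogous statement for $A\ltimes_{L_\triangleright^{*}+L_\triangleleft^{*},\,L_\triangleleft^{*}}A^{*}$.

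There is essentially no obstacle: the work is entirely bookkeeping, making sure that the signs in the two modules of Proposition 3.4 are fed correctly into the slots $(l,r)$ of Theorem 2.13 so that the resulting semidirect-product pre-Lie algebras match those in the statement. The content of the Corollary is thus the observation that an L-dendriform algebra is exactly the data of a pre-Lie algebra together with an $\mathcal{O}$-operator that happens to be the identity, and Theorem 2.13 automatically converts such an $\mathcal{O}$-operator into the canonical solution $\sum_i(e_i\otimes e_i^{*}+e_i^{*}\otimes e_i)$ of the $S$-equation.
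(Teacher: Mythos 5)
Your proposal is correct and follows exactly the paper's own argument: the authors likewise observe that $\mathrm{id}_A$ is an $\mathcal{O}$-operator of $(A,\circ)$ associated to $(L_\triangleright,-L_\triangleleft,A)$ and of $(A,\bullet)$ associated to $(L_\triangleright,R_\triangleleft,A)$, and then invoke Theorem 2.13. Your version merely spells out the bookkeeping (the verification that $\mathrm{id}$ satisfies Eq.\ (2.10), the identification $T=\sum_i e_i\otimes e_i^*$, and the substitution of $(l,r)$ into $A\ltimes_{l^*-r^*,-r^*}A^*$) that the paper leaves implicit, and all of it checks out.
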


\begin{proof} Since  $id$ is an $\mathcal {O}$-operator of both the
pre-Lie algebra $(A, \circ)$ associated to the module
$(L_\triangleright,-L_\triangleleft,A)$ and the pre-Lie algebra $(A,
\bullet)$ associated to the module $(L_\triangleright,\
R_\triangleleft,A)$, the conclusion follows from Theorem 2.13.
\end{proof}

\subsection{Relationships with dendriform algebras and quadri-algebras}

\begin{prop}
 Any dendriform algebra
$(A,\succ,\prec)$ is an L-dendriform algebra by letting
$x\triangleright y=x\succ y, x\triangleleft y=x\prec y$.
\end{prop}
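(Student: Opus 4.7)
My plan is a direct verification: substitute $\triangleright=\succ$ and $\triangleleft=\prec$ into the L-dendriform axioms (3.1) and (3.2), then show each reduces to (a consequence of) the three dendriform axioms.

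For axiom (3.1), after substitution the right-hand side is
\[
(x\succ y)\succ z+(x\prec y)\succ z+y\succ(x\succ z)-(y\prec x)\succ z-(y\succ x)\succ z.
\]
I would group the first two terms and apply the third dendriform relation $(x\ast y)\succ z=x\succ(y\succ z)$ to get $x\succ(y\succ z)$; similarly group $y\succ(x\succ z)-(y\prec x)\succ z-(y\succ x)\succ z$ and apply the same relation in reverse (noting $y\succ(x\succ z)=(y\ast x)\succ z=(y\succ x)\succ z+(y\prec x)\succ z$) to see that these three terms cancel. This leaves exactly $x\succ(y\succ z)$, matching the left-hand side.

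For axiom (3.2), the right-hand side becomes
\[
(x\succ y)\prec z+y\prec(x\succ z)+y\prec(x\prec z)-(y\prec x)\prec z.
\]
The first term equals $x\succ(y\prec z)$ by the second dendriform relation, which accounts for the left-hand side of (3.2). It therefore suffices to verify
\[
(y\prec x)\prec z=y\prec(x\succ z)+y\prec(x\prec z)=y\prec(x\ast z),
\]
and this is exactly the first dendriform relation with $y$ in place of $x$.

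There is no real obstacle here beyond careful sign- and term-tracking; every line is a direct invocation of one of the three defining identities for a dendriform algebra together with the definition $x\ast y=x\succ y+x\prec y$. I would write the verification as two short displayed computations, one for (3.1) and one for (3.2), citing the relevant dendriform identity after each step.
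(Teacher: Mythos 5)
Your proof is correct and takes essentially the same route as the paper: the paper's one-line argument observes that, under the substitution, both sides of the equivalent identities (3.6) and (3.7) from Remark 3.3 vanish by the three dendriform axioms, and that vanishing is exactly the regrouping you carry out inline on (3.1) and (3.2) (namely $(x\ast y)\succ z=x\succ(y\succ z)$, $(x\succ y)\prec z=x\succ(y\prec z)$, and $(y\prec x)\prec z=y\prec(x\ast z)$).
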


\begin{proof}
In fact, for a dendriform algebra, both sides of Eqs. (3.6) and
(3.7) which are the equivalent identities of an L-dendriform
algebra, are zero.
\end{proof}

\begin{remark}
{\rm In the above sense, associative algebras are the special
pre-Lie algebras whose associators are zero, whereas dendriform
algebras are the special L-dendriform algebras whose ``generalized
associators" (see Remark (3.3)) are zero. }
\end{remark}

By Proposition 3.2 and Proposition 3.14, the following result is obvious:

\begin{coro}
Let $(A,\succ, \prec)$ be a dendriform algebra.

{\rm (1)} {\rm (\cite{L1})} The binary operation given by Eq. (3.3)
defines a pre-Lie algebra (in fact, it is an associative algebra).

{\rm (2)} {\rm (\cite{C,Ron})} The binary operation given by Eq.
(3.4) defines a pre-Lie algebra.

\end{coro}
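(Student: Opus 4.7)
The plan is to deduce Corollary 3.16 as a direct translation of Proposition 3.2 through the embedding provided by Proposition 3.14. First I would invoke Proposition 3.14 to reinterpret the dendriform algebra $(A,\succ,\prec)$ as an L-dendriform algebra $(A,\triangleright,\triangleleft)$ with $\triangleright := \succ$ and $\triangleleft := \prec$. After that, both assertions become immediate specializations of the two parts of Proposition 3.2.

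For (1), Proposition 3.2(1) guarantees that the operation $x \bullet y = x \triangleright y + x \triangleleft y$ is a pre-Lie algebra on $A$. Substituting the dendriform operations gives $x \bullet y = x \succ y + x \prec y = x * y$, which is exactly the operation in Eq.~(3.3) applied to a dendriform algebra, so this is the pre-Lie structure claimed. For (2), Proposition 3.2(2) gives that $x \circ y = x \triangleright y - y \triangleleft x = x \succ y - y \prec x$ is a pre-Lie algebra, which is precisely the content of the second assertion.

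The only extra content is the parenthetical strengthening in (1) that $(A, *)$ is in fact associative, not merely pre-Lie. This does not come from Proposition 3.2 (which only sees the weaker pre-Lie identity); it requires using all three dendriform axioms of Eq.~(1.1). I would verify it directly by adding the three identities in Eq.~(1.1), which immediately yields $(x*y)*z = x*(y*z)$; this is precisely the standard fact recorded just after Eq.~(1.1) in the Introduction, so no new calculation is needed beyond citing it.

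There is no genuine obstacle here: the work has been front-loaded into Propositions 3.2 and 3.14, and the corollary is essentially a bookkeeping statement that the horizontal and vertical pre-Lie constructions on an L-dendriform algebra restrict correctly along the inclusion of dendriform algebras into L-dendriform algebras. The only care required is to check that the substitutions $\triangleright \mapsto \succ$, $\triangleleft \mapsto \prec$ in Eqs.~(3.3) and (3.4) recover the expected operations, which is immediate.
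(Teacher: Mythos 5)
Your proposal is correct and follows essentially the same route as the paper, which derives the corollary immediately from Proposition 3.14 (the inclusion of dendriform algebras into L-dendriform algebras via $\triangleright=\succ$, $\triangleleft=\prec$) combined with the two parts of Proposition 3.2. Your handling of the parenthetical associativity claim, by citing the standard fact recorded after Eq.~(1.1), is also consistent with the paper's treatment.
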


\begin{defn}{\rm (\cite{AL})\quad Let $A$ be a vector space
 with four binary operations denoted by
$\searrow, \nearrow, \nwarrow$ and $\swarrow : A\otimes A\rightarrow
A$. $(A, \searrow, \nearrow, \nwarrow, \swarrow)$ is called a {\it
quadri-algebra} if for any $x,y,z\in A$,
\begin{equation}
(x\nwarrow y)\nwarrow z=x\nwarrow (y* z),\;\; (x\nearrow y)\nwarrow
z=x\nearrow (y\prec z),\;\;(x\wedge y)\nearrow z=x\nearrow (y\succ
z),\end{equation}
\begin{equation}
(x\swarrow y)\nwarrow z=x\swarrow (y\wedge z),\;\; (x\searrow
y)\nwarrow z=x\searrow (y\nwarrow z),\;\;(x\vee y)\nearrow
z=x\searrow (y\nearrow z),\end{equation}
\begin{equation}(x\prec y)\swarrow z=x\swarrow (y\vee z),\;\; (x\succ y)\swarrow
z=x\searrow (y\swarrow z),\;\;(x* y)\searrow z=x\searrow (y\searrow
z),\end{equation}
 where
\begin{equation}x\succ y=x\nearrow y+x\searrow y,
x\prec y=x\nwarrow y+x\swarrow y,\end{equation}\begin{equation}
x\vee y=x\searrow y+x\swarrow y,x\wedge y=x\nearrow y+x\nwarrow
y,\end{equation}
\begin{equation}x*y=x\searrow y+x\nearrow y+x\nwarrow y+x\swarrow y=x\succ
y+x\prec  y=x\vee y+x\wedge y.\end{equation} }\end{defn}

\begin{prop}{\rm (\cite{AL})}\quad Let $(A, \searrow, \nearrow, \nwarrow,
\swarrow)$ be a quadri-algebra.

{\rm (1)} The binary operations given by Eq. (3.20) define a
dendriform algebra $(A,\succ,\prec)$.

{\rm (2)} The binary operations given by Eq. (3.21)
 define a dendriform algebra $(A,\vee,\wedge)$.

{\rm (3)} The binary operation given by  Eq. (3.22) defines an
associative algebra $(A,*)$.
\end{prop}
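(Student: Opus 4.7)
The plan is a direct verification. Parts (1) and (2) reduce to checking the three dendriform identities of Eq. (1.1) for the respective pairs of operations, while part (3) then follows from either (1) or (2) together with the fact that the sum operation $*=\succ+\prec$ of a dendriform algebra is associative. There is no conceptual difficulty; the key is to arrange the nine quadri-algebra identities (3.17)--(3.19) as a $3\times 3$ array, with the rows corresponding to the three displayed equations and the columns to the three positions within each equation, and then to recognize that parts (1) and (2) each consume the array in three triples---but along the columns in one case and along the rows in the other.

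For part (1), I would fix $\succ=\nearrow+\searrow$, $\prec=\nwarrow+\swarrow$, and $*=\succ+\prec=\wedge+\vee$. For the first dendriform identity $(x\prec y)\prec z=x\prec(y*z)$, I expand the LHS into the four terms $(x\nwarrow y)\nwarrow z$, $(x\swarrow y)\nwarrow z$, $(x\nwarrow y)\swarrow z$, $(x\swarrow y)\swarrow z$; the first two are rewritten by the first identities of (3.17) and (3.18) respectively, while the last two recombine as $(x\prec y)\swarrow z$ and are then rewritten by the first identity of (3.19). Using $y*z=y\wedge z+y\vee z$ from (3.22), the sum collapses to $x\prec(y*z)$. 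The remaining two dendriform axioms are verified in exactly the same pattern, drawing on the second column and then the third column of the array. Thus part (1) consumes the nine quadri-identities column by column.

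For part (2), the same strategy applies with $\wedge=\nearrow+\nwarrow$ and $\vee=\searrow+\swarrow$, and a pleasant duality emerges: this time each dendriform identity is produced entirely from the three quadri-identities appearing in a single row. Concretely, $(x\wedge y)\wedge z=x\wedge(y*z)$ follows from the three identities of (3.17) after splitting the LHS, $(x\vee y)\wedge z=x\vee(y\wedge z)$ from the three identities of (3.18), and $x\vee(y\vee z)=(x*y)\vee z$ from the three identities of (3.19), with the appropriate regroupings via (3.20)--(3.22). Part (3) is then immediate from part (1) and the standard associativity of $*$ in any dendriform algebra, together with the identification $*=\succ+\prec$ built into (3.22). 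The only obstacle is combinatorial book-keeping: each of the six verifications produces a small sum of triple products that must be paired up correctly before invoking (3.22), but once the $3\times 3$ array of quadri-axioms is written down the computations are essentially mechanical.
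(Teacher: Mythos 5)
Your verification is correct: the paper itself offers no proof of this proposition, simply citing Aguiar--Loday \cite{AL}, and your computation goes through exactly as you describe (e.g.\ $(x\prec y)\prec z$ splits into four terms, three of the first-column axioms collapse it to $x\nwarrow(y*z)+x\swarrow(y\wedge z+y\vee z)=x\prec(y*z)$, and similarly for the other five identities). Your organizing observation --- that the dendriform structure $(\succ,\prec)$ consumes the $3\times 3$ array of quadri-axioms column by column while $(\vee,\wedge)$ consumes it row by row, with part (3) following from the standard associativity of $*=\succ+\prec$ noted after Eq.\ (1.1) --- is precisely how the original source arranges the axioms into its ``horizontal'' and ``vertical'' dendriform structures, so you have in effect reconstructed the proof from \cite{AL}.
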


There is an additional relation between quadri-algebras and
L-dendriform algebras as follows:

\begin{prop}
Let $(A, \searrow, \nearrow, \nwarrow, \swarrow)$ be a
quadri-algebra. The binary operations
 given by \begin{equation} x
\triangleright y = x \searrow y - y \nwarrow x,x \triangleleft y = x
\nearrow y - y \swarrow x,\;\;\forall x,y\in A,
\end{equation} define an L-dendriform algebra $(A, \triangleright,\triangleleft)$.
\end{prop}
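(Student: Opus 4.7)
The plan is to apply Proposition 3.4(1) rather than to verify (3.1) and (3.2) directly. A direct computation gives the candidate horizontal pre-Lie product:
\begin{equation*}
x\bullet y \;=\; x\triangleright y + x\triangleleft y \;=\; (x\searrow y - y\nwarrow x) + (x\nearrow y - y\swarrow x) \;=\; x\succ y - y\prec x,
\end{equation*}
which is precisely the associated vertical pre-Lie product attached to the dendriform algebra $(A,\succ,\prec)$ of Proposition 3.18(1). Corollary 3.16(2) therefore tells us for free that $(A,\bullet)$ is a pre-Lie algebra. The parallel computation $x\triangleright y - y\triangleleft x = x\vee y - y\wedge x$ identifies the corresponding vertical pre-Lie product with the one arising from the dendriform algebra $(A,\vee,\wedge)$ of Proposition 3.18(2), a useful consistency check.

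With $(A,\bullet)$ known to be pre-Lie, Proposition 3.4(1) reduces the proof to showing that $(L_\triangleright, R_\triangleleft, A)$ is a module of $(A,\bullet)$; that is, the module identities (2.5) and (2.6) hold with $l=L_\triangleright$, $r=R_\triangleleft$ and $\circ=\bullet$. Evaluated on a test element $z\in A$, these two conditions are exactly (3.6) and a mild relabelling of (3.2).

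The remaining work is a direct calculation. Substituting $x\triangleright y = x\searrow y - y\nwarrow x$, $x\triangleleft y = x\nearrow y - y\swarrow x$ and $x\bullet y = x\succ y - y\prec x$ into each term produces sums of triple compositions in the four quadri-algebra operations. Each such triple has one of $\nwarrow, \nearrow, \swarrow, \searrow$ as its outer operation, and exactly one of the nine quadri-algebra axioms (3.17)--(3.19) rewrites it with the inner bracket moved to the right-hand factor. After this rewriting, both sides of each identity collapse to the same expression.

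The main obstacle is bookkeeping: each of (3.6) and the relabelled form of (3.2) expands into on the order of twenty signed triple-compositions, so organising them by outer operation (or by the axiom to be applied) is essential to make the matching transparent. No individual step is conceptually difficult, but keeping track of all the cancellations in parallel is where the care lies.
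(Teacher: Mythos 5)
Your proposal is correct, and the computation it gestures at does go through, but the detour through Proposition 3.4(1) is cosmetic rather than labour-saving: the module identities (2.5)--(2.6) for $(L_\triangleright, R_\triangleleft, A)$ over $(A,\bullet)$, evaluated on a test element, are \emph{verbatim} Eqs.\ (3.6)--(3.7), which are themselves just the defining axioms (3.1)--(3.2) rewritten. So the calculation you are left with is exactly the direct verification the paper compresses into ``It is straightforward''; the one genuine freebie in your framing is the pre-Lie property of $\bullet$, which the direct route never needs separately, since it follows from (3.1)--(3.2) by Proposition 3.2. Your appeal to Corollary 3.16 and Proposition 3.18 is legitimate and, importantly, non-circular: the identifications $x\bullet y = x\succ y - y\prec x$ and $x\circ y = x\vee y - y\wedge x$ appear in the paper only afterwards, in Corollary 3.20, as consequences of this very proposition, but you obtain them by direct expansion from Eqs.\ (3.20)--(3.21), which is fine. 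One refinement to your sketch: each of the nine axioms (3.17)--(3.19) carries a compound operation ($\prec,\succ,\vee,\wedge,*$) on one side, so individual basic triples are \emph{not} rewritten one-for-one by a single axiom; you must first group pairs of expanded terms into compound products. Concretely, for Eq.\ (3.7) the eight terms on each side match after invoking $(x\nearrow y)\nwarrow z = x\nearrow(y\prec z)$, $(x\wedge y)\nearrow z = x\nearrow(y\succ z)$, $(x\swarrow y)\nwarrow z = x\swarrow(y\wedge z)$, $(x\vee y)\nearrow z = x\searrow(y\nearrow z)$, $(x\prec y)\swarrow z = x\swarrow(y\vee z)$ and $(x\succ y)\swarrow z = x\searrow(y\swarrow z)$, splitting the compound sides; for Eq.\ (3.6) the cleanest route is to show the left-hand side is symmetric in $x,y$, using $(x\nwarrow y)\nwarrow z = x\nwarrow(y*z)$, $(x\searrow y)\nwarrow z = x\searrow(y\nwarrow z)$ and $(x*y)\searrow z = x\searrow(y\searrow z)$. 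With that bookkeeping made explicit, your argument is complete and coincides in substance with the paper's (unwritten) direct check.
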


\begin{proof}
It is straightforward.
\end{proof}

\begin{coro} Let $(A, \searrow, \nearrow, \nwarrow, \swarrow)$ be a
quadri-algebra.

{\rm (1)} The binary operation given by
\begin{equation} x \circ y = x \searrow y + x \swarrow y
- y \nwarrow x - y \nearrow x = x \triangleright y -y \triangleleft
x =x \vee y - y \wedge x,\;\;\forall x,y\in A,
\end{equation}
defines a pre-Lie algebra $(A,\circ)$.

{\rm (2)} The binary operation given by Eq. (3.22) defines an
associative algebra $(A,*)$.

 {\rm (3)} The binary operation given by
\begin{equation} x \bullet y = x \searrow y + x \nearrow y
- y \nwarrow x - y \swarrow x= x \triangleright y +x \triangleleft y
=x \succ y -y \prec x,\;\;\forall x,y\in A,
\end{equation}
defines a pre-Lie algebra $(A,\bullet)$.

{\rm (4)} The binary operation given by
\begin{equation}
[x, y] = x \searrow y + x \swarrow y + x \nearrow y + x \nwarrow y -
(y \searrow x + y \swarrow x + y \nearrow x + y \nwarrow
x),\;\;\forall x,y\in A,
\end{equation}
defines a Lie algebra $(\frak g(A),[,])$.

\end{coro}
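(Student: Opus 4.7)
The plan is to derive all four statements as immediate corollaries of the structural results already established for L-dendriform algebras, rather than to verify the pre-Lie, associative, and Lie identities directly from the nine axioms of a quadri-algebra.

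First I would invoke Proposition 3.19: the operations $x\triangleright y = x\searrow y - y\nwarrow x$ and $x\triangleleft y = x\nearrow y - y\swarrow x$ endow $A$ with the structure of an L-dendriform algebra. This is the entire ``content'' step that requires the quadri-algebra axioms; after this point everything is formal bookkeeping.

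For part (1), Proposition 3.2(2) says that the associated vertical operation $x\triangleright y - y\triangleleft x$ on an L-dendriform algebra is pre-Lie. Substituting the formulas for $\triangleright$ and $\triangleleft$ gives
\[
x\triangleright y - y\triangleleft x = (x\searrow y - y\nwarrow x) - (y\nearrow x - x\swarrow y) = x\searrow y + x\swarrow y - y\nwarrow x - y\nearrow x,
\]
which matches Eq.~(3.24), so $(A,\circ)$ is pre-Lie. For part (3), the analogous computation with the horizontal product $x\triangleright y + x\triangleleft y$ and Proposition 3.2(1) yields
\[
x\bullet y = x\searrow y + x\nearrow y - y\nwarrow x - y\swarrow x,
\]
and this is pre-Lie. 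For part (4), Proposition 3.2(3) tells us that $[x,y] = x\bullet y - y\bullet x$ is the sub-adjacent Lie bracket; expanding shows that the four negative $\nwarrow,\swarrow$ terms combine with the $y\bullet x$ expansion so that all sixteen terms organize into the totally antisymmetric expression of Eq.~(3.26), giving the Lie algebra $(\mathfrak g(A),[,])$.

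Part (2) is not an L-dendriform consequence at all but is simply Proposition 3.18(3) applied to the quadri-algebra $A$; I would just cite it. There is no real obstacle in this proof — the only place a nontrivial computation could hide is in Proposition 3.19, which has already been established; the remaining work is the arithmetic of verifying that Eqs.~(3.24)–(3.26) agree with the three identities $\triangleright - \triangleleft^{\mathrm{op}}$, $\triangleright + \triangleleft$, and $(\triangleright+\triangleleft) - (\triangleright+\triangleleft)^{\mathrm{op}}$, which I would carry out in one line each.
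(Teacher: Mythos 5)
Your proof is correct, and its spirit matches the paper's: both treat the corollary as pure bookkeeping on top of previously established results, with Proposition 3.19 carrying all the quadri-algebra content. The one genuine difference is the choice of intermediate structure for parts (1) and (3). The paper cites Proposition 3.18 and Corollary 3.16 alongside Proposition 3.19, i.e.\ it obtains the pre-Lie property of $\circ$ and $\bullet$ through the two dendriform algebras $(A,\vee,\wedge)$ and $(A,\succ,\prec)$ of Proposition 3.18 via the dendriform-to-pre-Lie passage $x\circ_1 y - y\circ_2 x$ of Corollary 3.16(2), whereas you run everything through the single L-dendriform algebra of Proposition 3.19 and its vertical and horizontal pre-Lie algebras from Proposition 3.2(1)--(2); similarly, for (4) the paper invokes Proposition 2.2 on the pre-Lie algebras of (1)/(3), while you invoke the sub-adjacent Lie algebra of Proposition 3.2(3) --- the same fact in different packaging. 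Both routes are valid precisely because the chains of equalities displayed in Eqs.\ (3.24) and (3.25) assert that the two constructions produce identical operations. What each buys: your route is more uniform, with every conclusion flowing from one L-dendriform structure; the paper's route through Corollary 3.16 has the side benefit of simultaneously certifying the third expressions $x\vee y - y\wedge x$ and $x\succ y - y\prec x$ appearing in the statement, which your write-up does not explicitly verify (they follow from the same one-line expansions you carry out for the $\triangleright,\triangleleft$ forms, so nothing is broken, but since those equalities are part of the asserted statement they deserve the one line). Two trivial points: the expansion of $[x,y]=x\bullet y - y\bullet x$ involves eight terms, not sixteen; and your treatment of (2) by direct citation of Proposition 3.18(3) is exactly what the paper does.
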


\begin{proof}

(1) and (3) follow from Proposition 3.18, Proposition 3.19 and Corollary
3.16. (2) is exactly the conclusion (3) in Proposition 3.18 which
follows from the conclusions (1) and (2) in Proposition 3.18 and
Corollary 3.16. (4) follows from (1), (2), (3) and Proposition
2.2.
\end{proof}

Summarizing the above study in this subsection, we have the
following commutative diagram:
\begin{equation}\begin{matrix} \mbox{Lie algeba} &\stackrel{-}{\leftarrow} &
\mbox{Pre-Lie algebra}& \stackrel{-,+}{\leftarrow} &
\mbox{L-dendriform algebra}& \cr & \stackrel{-}{\nwarrow} &
\Uparrow\in &\stackrel{-}{\nwarrow} & \Uparrow \in
&\stackrel{-}{\nwarrow} \cr
 & & \mbox{Associative algebra} &\stackrel{+}{\leftarrow}
& \mbox{Dendriform algebra}&\stackrel{+}{\leftarrow} &
\mbox{Quadri-algebra}\cr
\end{matrix}\end{equation}
\noindent where $``\Uparrow\in ''$ means the inclusion. $``+''$
means the binary operation $x\circ_1y+x\circ_2 y$ and  $``-''$ means
the binary operation $x\circ_1y-y\circ_2 x$.

\section{$\mathcal {O}$-operators of L-dendriform algebras and $LD$-equation}
 \setcounter{equation}{0}
\renewcommand{\theequation}
{4.\arabic{equation}}


\begin{defn}{\rm Let $(A,\triangleright, \triangleleft)$ be an L-dendriform
algebra and $V$ be a vector space. Let $l_{\triangleright}$,
$r_{\triangleright},l_{\triangleleft},r_{\triangleleft}:A\rightarrow
gl(V)$ be four linear maps.
$(l_{\triangleright},r_{\triangleright},l_{\triangleleft},r_{\triangleleft},
V)$ is called a {\it module of $(A,\triangleright, \triangleleft)$}
if
\begin{equation}
[l_{\triangleright}(x),l_{\triangleright}(y)] =
l_{\triangleright}[x,y];
\end{equation}
\begin{equation}
[l_{\triangleright}(x),l_{\triangleleft}(y)] =
l_{\triangleleft}(x\circ y) +
l_{\triangleleft}(y)l_{\triangleleft}(x);
\end{equation}
\begin{equation}
r_{\triangleright}(x\triangleright y) =
r_{\triangleright}(y)r_{\triangleright}(x) +
r_{\triangleright}(y)r_{\triangleleft}(x) +
[l_{\triangleright}(x),r_{\triangleright}(y)]-
r_{\triangleright}(y)l_{\triangleleft}(x);
\end{equation}
\begin{equation}
r_{\triangleright}(x\triangleleft y) =
r_{\triangleleft}(y)r_{\triangleright}(x) +
l_{\triangleleft}(x)r_{\triangleright}(y) +
[l_{\triangleleft}(x),r_{\triangleleft}(y)];
\end{equation}
\begin{equation}
[l_{\triangleright}(x),r_{\triangleleft}(y)] =
r_{\triangleleft}(x\bullet y) -
r_{\triangleleft}(y)r_{\triangleleft}(x),\forall x,y\in A,
\end{equation}
where $x\circ y= x\triangleright y - y\triangleleft x,$ and $
x\bullet y= x\triangleright y + x\triangleleft y$.}\end{defn}

In fact,
$(l_{\triangleright},r_{\triangleright},l_{\triangleleft},r_{\triangleleft},
V)$ is a module of an L-dendriform algebra $(A, \triangleright,
\triangleleft)$ if and only if there exists an L-dendriform algebra
structure on the direct sum $A\oplus V$ of the underlying vector
spaces of $A$ and $V$ given by (for any $x,y \in A, u,v \in V $)
\begin{equation} (x+u)\triangleright(y+v)=x\triangleright y +
l_{\triangleright}(x)v + r_{\triangleright}(y)u,
(x+u)\triangleleft(y+v)=x\triangleleft y + l_{\triangleleft}(x)v +
r_{\triangleleft}(y)u.
\end{equation}
We denote it by
$A\ltimes_{l_{\triangleright},r_{\triangleright},l_{\triangleleft},r_{\triangleleft}}V.$

\begin{prop}
Let $(l_{\triangleright}, r_{\triangleright}, l_{\triangleleft},
r_{\triangleleft}, V)$ be a module of an L-dendriform algebra
$(A,\triangleright, \triangleleft)$. Then
$(l_{\triangleright}^*+l_{\triangleleft}^*-r_{\triangleright}^*-r_{\triangleleft}^*,\
r_{\triangleright}^*,r_{\triangleright}^*-l_{\triangleleft}^*,\
-(r_{\triangleright}^*+r_{\triangleleft}^*),V^*)$ is a module of
$(A,\triangleright,\triangleleft)$.
 \end{prop}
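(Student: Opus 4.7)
The plan is to verify the five L-dendriform module axioms (4.1)--(4.5) for the dualized quintuple on $V^*$. The key tool is the duality identity coming from (1.9): for any linear maps $\rho,\sigma : A \to gl(V)$, one has $\rho^*(x)\sigma^*(y) = -(\sigma(y)\rho(x))^*$ and hence $[\rho^*(x),\sigma^*(y)] = [\rho(x),\sigma(y)]^*$, together with the $\mathbb{R}$-linearity of $\rho \mapsto \rho^*$.

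The strategy is to dispatch (4.1), (4.2), and (4.5) via Proposition 2.5, and handle (4.3), (4.4) by direct computation. First I would extract two pre-Lie modules of the horizontal and vertical pre-Lie algebras of $A$ from the given L-dendriform module: using the semidirect sum (4.6) together with Proposition 3.2, the horizontal pre-Lie algebra of $A \ltimes V$ is $A \ltimes_{l_\triangleright+l_\triangleleft,\ r_\triangleright+r_\triangleleft} V$, so $(l_\triangleright+l_\triangleleft,\, r_\triangleright+r_\triangleleft,\, V)$ is a module of $(A, \bullet)$; similarly $(l_\triangleright-r_\triangleleft,\, r_\triangleright-l_\triangleleft,\, V)$ is a module of $(A, \circ)$. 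Applying Proposition 2.5 to each and simplifying the resulting $l$- and $r$-parts, one sees that the dualized horizontal module on $V^*$ is exactly $(l'_\triangleright, r'_\triangleleft, V^*)$ as a module of $(A, \bullet)$, and the dualized vertical module is exactly $(l'_\triangleright, -l'_\triangleleft, V^*)$ as a module of $(A, \circ)$, where $l'_\triangleright, l'_\triangleleft, r'_\triangleleft$ are the operators defined in the statement. Unpacking the definition (2.5)--(2.6) of pre-Lie module in these two cases immediately yields axioms (4.1) and (4.5) from the horizontal case, and (4.1) and (4.2) from the vertical case.

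For the remaining axioms (4.3) and (4.4), which explicitly involve $r'_\triangleright = r_\triangleright^*$, I would expand both sides of each identity using the definitions of the primed operators, apply the duality identities above to rewrite every product and commutator of starred operators as a signed dual of a product or commutator of the originals, and simplify. After bookkeeping, the resulting identity on $V^*$ is precisely the dual of (4.3), respectively (4.4), for the original tuple, which holds by hypothesis.

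The main obstacle is the algebraic bookkeeping in this last step: each side of the new (4.3) or (4.4) expands into roughly eight composed terms, and the decisive cancellations come from pairing $r_\triangleright(y)r_\triangleright(x)$ with $[r_\triangleright(x),r_\triangleright(y)]$ and $[r_\triangleleft(x),r_\triangleright(y)]$ in the case of (4.3), and the analogous triples for (4.4). Grouping terms by which original operator sits on the outside of each composite makes these cancellations transparent, and each identity then collapses onto its original counterpart, completing the proof.
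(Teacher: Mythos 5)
Your proposal is correct, and the two computations you left as ``bookkeeping'' do close: using $\rho^*(x)\sigma^*(y)=-(\sigma(y)\rho(x))^*$ and $[\rho^*(x),\sigma^*(y)]=[\rho(x),\sigma(y)]^*$, axioms (4.3) and (4.4) for the quintuple $(l_{\triangleright}^*+l_{\triangleleft}^*-r_{\triangleright}^*-r_{\triangleleft}^*,\ r_{\triangleright}^*,\ r_{\triangleright}^*-l_{\triangleleft}^*,\ -(r_{\triangleright}^*+r_{\triangleleft}^*))$ collapse exactly onto (4.3) and (4.4) for the original module, with the cancellations you predict (e.g.\ $r_{\triangleright}(x)r_{\triangleright}(y)-[r_{\triangleright}(x),r_{\triangleright}(y)]=r_{\triangleright}(y)r_{\triangleright}(x)$ and $r_{\triangleleft}(x)r_{\triangleright}(y)-[r_{\triangleleft}(x),r_{\triangleright}(y)]=r_{\triangleright}(y)r_{\triangleleft}(x)$ in the case of (4.3)). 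The paper itself dismisses the proposition with ``It is straightforward,'' i.e.\ the intended argument is a uniform direct verification of all five identities (4.1)--(4.5) via Eq.\ (1.9), so your route is genuinely different for three of the five axioms: from the semidirect sum (4.6) and Proposition 3.2 you correctly extract that $(l_{\triangleright}+l_{\triangleleft},\,r_{\triangleright}+r_{\triangleleft},\,V)$ is a module of $(A,\bullet)$ and $(l_{\triangleright}-r_{\triangleleft},\,r_{\triangleright}-l_{\triangleleft},\,V)$ is a module of $(A,\circ)$, and Proposition 2.5 dualizes these to precisely the ``horizontal shadow'' $(l_{\triangleright}',\,r_{\triangleleft}',\,V^*)$ over $(A,\bullet)$ and the ``vertical shadow'' $(l_{\triangleright}',\,-l_{\triangleleft}',\,V^*)$ over $(A,\circ)$, whose pre-Lie module identities (2.5)--(2.6) unpack to (4.1), (4.5) and (4.1), (4.2) respectively (note the sign flip in $-l_{\triangleleft}'$ is what produces the $+l_{\triangleleft}'(y)l_{\triangleleft}'(x)$ term in (4.2)). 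What this buys is conceptual: it explains where the odd-looking dual quadruple comes from --- its horizontal and vertical components are forced to be the Proposition 2.5 duals of the induced pre-Lie modules --- and it reduces the hand computation to the two axioms involving $r_{\triangleright}^*$, which are exactly the ones not visible at the pre-Lie level. The only cost is reliance on the semidirect-sum characterization stated (without proof) after Definition 4.1; a fully direct check, as the paper intends, is longer but self-contained.
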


\begin{proof}
It is straightforward.
\end{proof}


\begin{defn} {\rm Let $(A, \triangleright, \triangleleft)$ be an L-dendriform
algebra and $(l_\triangleright, r_\triangleright, l_\triangleleft,
r_\triangleleft, V)$ be a module. A linear map $T: V \rightarrow A$
is called an {\it $\mathcal {O}$-operator of $(A, \triangleright,
\triangleleft)$ associated to $(l_\triangleright, r_\triangleright,
l_\triangleleft, r_\triangleleft, V)$} if $T$ satisfies (for any
$u,v\in V$)
\begin{equation}
T(u) \triangleright T(v) =T[l_\triangleright(T(u))v +
r_\triangleright(T(v))u],
T(u) \triangleleft T(v) = T[l_\triangleleft(T(u))v +
r_\triangleleft(T(v))u].
\end{equation}
}\end{defn}

By a similar proof as of Theorem 2.12, we obtain the following two
conclusions:

\begin{prop} Let $(A, \triangleright, \triangleleft)$ be an L-dendriform
algebra and $r\in A \otimes A$ be skew-symmetric.  Let $(A,\circ)$
and $(A,\bullet)$ be the associated vertical and horizontal pre-Lie
algebras respectively. Then the following conditions are equivalent:

{\rm (1)} $r$ is an $\mathcal {O}$-operator of $(A, \circ)$
associated to
 $(L_{\triangleright}^* +
L_{\triangleleft}^*,L_{\triangleleft}^*,\ A^*)$.

{\rm (2)} $r$ is an $\mathcal {O}$-operator of  $(A, \bullet)$
associated to  $(L_{\triangleright}^* - R_{\triangleleft}^*,\
-R_{\triangleleft}^*,A^*)$.

{\rm (3)} $r$ satisfies
\begin{equation}
r_{13} \circ r_{23} + r_{12} \bullet r_{23} - r_{12} \triangleleft
r_{13} = 0.
\end{equation}

\end{prop}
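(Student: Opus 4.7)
The plan is to mimic the coordinate computation used in the proof of Theorem 2.12, now adapted to skew-symmetric $r$. First I would observe that by Corollary 3.5 both module structures appearing in (1) and (2) are the standard dual modules produced by Proposition 2.5 from the pre-Lie modules of Proposition 3.4: namely, $(L_\triangleright^* + L_\triangleleft^*,\ L_\triangleleft^*,\ A^*)$ is the dual of $(L_\triangleright, -L_\triangleleft, A)$ over $(A,\circ)$, and $(L_\triangleright^* - R_\triangleleft^*,\ -R_\triangleleft^*,\ A^*)$ is the dual of $(L_\triangleright, R_\triangleleft, A)$ over $(A,\bullet)$. Hence (1) and (2) are genuine $\mathcal{O}$-operator equations of pre-Lie algebras, and the strategy is to show that each of them unfolds to the single identity (4.8) in a fixed basis.

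Next I would fix a basis $\{e_i\}$ of $A$ with dual basis $\{e_i^*\}$ and write $e_i\triangleright e_j = \sum_k c_{ij}^k e_k$, $e_i\triangleleft e_j = \sum_k d_{ij}^k e_k$, so that the structure constants of $\circ$ and $\bullet$ become $c_{ij}^k - d_{ji}^k$ and $c_{ij}^k + d_{ij}^k$ respectively. Writing $r = \sum a_{ij}\, e_i\otimes e_j$ with $a_{ij} = -a_{ji}$ (so that $r(e_i^*) = \sum_k a_{ik} e_k$), I would expand Eq. (4.8) componentwise using the conventions of (1.5) and read off the coefficient of $e_P\otimes e_Q\otimes e_R$ as three explicit quadratic sums in the $a_{ij}$'s and the structure constants, one sum coming from each of $r_{13}\circ r_{23}$, $r_{12}\bullet r_{23}$, $r_{12}\triangleleft r_{13}$.

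Then I would compute the $\mathcal{O}$-operator equation of (1) evaluated on $(e_i^*, e_j^*)$, using $\langle L_\triangleright^*(x)u^*, v\rangle = -\langle u^*, x\triangleright v\rangle$ and the analogous formulas for $L_\triangleleft^*$ and $R_\triangleleft^*$, and take the $e_k$-coefficient on both sides. Comparing with the $e_i\otimes e_j\otimes e_k$-coefficient coming from (4.8), two of the three matching sums agree verbatim while the third differs only by a transposition of two indices in a single factor $a_{\cdot\cdot}$; the skew-symmetry $a_{ij} = -a_{ji}$ supplies precisely the sign needed. This yields (1) $\Leftrightarrow$ (3). The analogous calculation for (2), performed inside $(A,\bullet)$ with the dual module $(L_\triangleright^* - R_\triangleleft^*,\ -R_\triangleleft^*,\ A^*)$, produces the same identity, giving (2) $\Leftrightarrow$ (3) and hence (1) $\Leftrightarrow$ (2).

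The main obstacle is purely combinatorial bookkeeping: keeping straight the slot conventions of (1.5) (which tensor factor carries the product in each mixed term of (4.8)) and tracking the signs introduced by the dualized actions $L_\triangleright^*$, $L_\triangleleft^*$, $R_\triangleleft^*$. A single slot or sign slip derails the identification with (4.8), and the essential use of skew-symmetry—absent in the symmetric-$r$ case of Theorem 2.12—must be invoked at exactly one spot in each of the two matchings.
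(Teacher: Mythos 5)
Your proposal takes exactly the paper's route: the paper proves this proposition only by remarking that it follows ``by a similar proof as of Theorem 2.12,'' which is precisely the basis/structure-constant coefficient comparison you outline, and your identification of the two module structures as the duals (via Propositions 2.5 and 3.4) of $(L_{\triangleright},-L_{\triangleleft},A)$ over $(A,\circ)$ and of $(L_{\triangleright},R_{\triangleleft},A)$ over $(A,\bullet)$ is correct, as is your account of the matching for (1), where two sums agree verbatim and the third needs a single flip $a_{kj}=-a_{jk}$. One small correction to your last claim: the unfolding of condition (2) does \emph{not} reproduce the same coefficient identity verbatim at the triple $(i,j,k)$ --- writing $E2_{ijn}$ for the $e_n$-coefficient of the $\mathcal{O}$-operator equation of $(A,\bullet)$ evaluated on $(e_i^*,e_j^*)$, one finds $E2_{ijn}$ equals $-1$ times the coefficient of $e_i\otimes e_n\otimes e_j$ in Eq.\ (4.8) (equivalently, the coefficient identity of the $\Sigma_3$-permuted form (4.13)), with skew-symmetry used in several factors rather than ``at exactly one spot'' --- so the second matching requires the slot-permutation relabeling in the spirit of Lemma 4.6, which is harmless bookkeeping within your framework but should be stated, since as written the verbatim identification you predict for (2) would fail and might cause confusion when executing the computation.
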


\begin{prop} Let $(A,\triangleright,\triangleleft)$
be an L-dendriform algebra and $r \in A \otimes A$ be
skew-symmetric. Then $r$ is an $\mathcal {O}$-operator of
 $(A,\triangleright,\triangleleft)$ associated to
 $(L_{\triangleright}^* + L_{\triangleleft}^* -
R_{\triangleright}^* - R_{\triangleleft}^*,R_{\triangleright}^*,\
R_{\triangleright}^* - L_{\triangleleft}^*,-(R_{\triangleright}^* +
R_{\triangleleft}^*),A^*)$ if and only if $r$ satisfies the
following equations:
\begin{equation}
r_{13} \triangleright r_{23} = -[r_{12}, r_{23}] + r_{13}
\triangleright r_{12},
\end{equation}
\begin{equation}
r_{23} \triangleleft r_{13} = r_{13} \circ r_{12} + r_{23} \bullet
r_{12}.
\end{equation}
\end{prop}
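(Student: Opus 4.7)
The plan is to follow the coordinate-based strategy of Theorem~2.12, adapted to the L-dendriform setting where the $\mathcal O$-operator condition consists of two defining relations (one for $\triangleright$, one for $\triangleleft$). These will be shown to correspond, term by term, to Eqs.~(4.9) and (4.10) respectively.

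Fix a basis $\{e_1,\dots,e_n\}$ of $A$ with dual basis $\{e_1^*,\dots,e_n^*\}$, and write
$$e_i\triangleright e_j=\sum_k p_{ij}^k e_k,\qquad e_i\triangleleft e_j=\sum_k q_{ij}^k e_k,\qquad r=\sum_{i,j}a_{ij}\,e_i\otimes e_j,$$
with $a_{ij}=-a_{ji}$, so that $r\colon A^*\to A$ acts by $r(e_i^*)=\sum_k a_{ik}e_k$. By Eq.~(1.9) each of the four coadjoint operators unfolds into an explicit expression in the structure constants $p,q$; in particular, the combination $L_\triangleright^*+L_\triangleleft^*-R_\triangleright^*-R_\triangleleft^*$ is (up to sign) the coadjoint of the sub-adjacent Lie bracket of Eq.~(3.5), which is convenient when comparing to the $[r_{12},r_{23}]$ term.

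The first step is to expand the $\triangleright$-defining relation
$$r(e_p^*)\triangleright r(e_q^*) = r\bigl((L_\triangleright^*+L_\triangleleft^*-R_\triangleright^*-R_\triangleleft^*)(r(e_p^*))e_q^* + R_\triangleright^*(r(e_q^*))e_p^*\bigr)$$
and extract the coefficient of $e_r$, obtaining a quadratic expression in the $a_{\cdot\cdot}$'s. Independently, using the slot conventions of Eq.~(1.5), the three summands of Eq.~(4.9) read
\begin{align*}
r_{13}\triangleright r_{23}&=\sum_{i,j,s,t} a_{ij}a_{st}\,e_i\otimes e_s\otimes(e_j\triangleright e_t),\\
[r_{12},r_{23}]&=\sum a_{ij}a_{st}\,e_i\otimes[e_j,e_s]\otimes e_t,\\
r_{13}\triangleright r_{12}&=\sum a_{ij}a_{st}\,(e_i\triangleright e_s)\otimes e_t\otimes e_j.
\end{align*}
Reading off the coefficient of $e_p\otimes e_q\otimes e_r$ and using $a_{ij}=-a_{ji}$ to relabel summation indices, one recognizes exactly the expression obtained from the operator relation. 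The second $\mathcal O$-operator relation (for $\triangleleft$) is handled by the same procedure and is shown to match the coordinate expansion of Eq.~(4.10).

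The main obstacle is not conceptual but purely clerical: one has to carry the sign from Eq.~(1.9) correctly through each of the four coadjoint actions, and track which slots of the $r_{\cdot\cdot}$'s are multiplied under each binary operation. Skew-symmetry of $r$ is the only algebraic input beyond routine expansion, and it is what lets coefficients arising from different permutations of dummy indices be identified -- paralleling the role played by the symmetry of $r$ in the proof of Theorem~2.12.
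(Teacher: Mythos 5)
Your proposal is correct and matches the paper's own route: the paper proves this proposition ``by a similar proof as of Theorem 2.12,'' i.e.\ exactly the coordinate expansion of the two $\mathcal O$-operator relations against the tensor expansions of Eqs.~(4.9) and (4.10), with skew-symmetry of $r$ used to relabel dummy indices. Your tensor expansions of $r_{13}\triangleright r_{23}$, $[r_{12},r_{23}]$ and $r_{13}\triangleright r_{12}$ follow the conventions of Eq.~(1.5) correctly, and the claimed term-by-term correspondence (the $\triangleright$-relation with Eq.~(4.9) via $\mathrm{ad}^*=L_\triangleright^*+L_\triangleleft^*-R_\triangleright^*-R_\triangleleft^*$, and the $\triangleleft$-relation with Eq.~(4.10) via $R_\circ^*=R_\triangleright^*-L_\triangleleft^*$ and $R_\bullet^*=R_\triangleright^*+R_\triangleleft^*$) checks out.
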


\begin{lemma}
Let $(A,\triangleright,\triangleleft)$ be an L-dendriform algebra
and $r\in A\otimes A$ be skew-symmetric. Then the following
conditions are equivalent:

{\rm (1)} $r$ satisfies Eq. (4.8);

{\rm (2)} $r$ satisfies Eq. (4.10);

{\rm (3)} $r$ satisfies one of the following equations:
\begin{equation}
r_{23} \circ r_{13} - r_{12} \bullet r_{13} + r_{12} \triangleleft
r_{23} = 0;
\end{equation}
\begin{equation}
r_{23} \circ r_{12} + r_{13} \bullet r_{12} + r_{13} \triangleleft
r_{23} = 0;
\end{equation}
\begin{equation}
r_{12} \circ r_{23} + r_{13} \bullet r_{23} + r_{13} \triangleleft
r_{12} = 0;
\end{equation}
\begin{equation}
r_{12} \circ r_{13} - r_{23} \bullet r_{13} + r_{23} \triangleleft
r_{12} = 0.
\end{equation}
\end{lemma}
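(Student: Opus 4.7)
The plan is to realise all six tensor equations as images of Eq.~(4.8) under the natural action of the symmetric group $S_3$ on $A\otimes A\otimes A$ by permutation of tensor factors. For any $\tau\in S_3$ acting on $A^{\otimes 3}$ in the obvious way and any binary operation $\star$ on $A$, one has $\tau(r_{ij}\star r_{kl})=r_{\tau(i)\tau(j)}\star r_{\tau(k)\tau(l)}$, because $\star$ is applied in the common slot of $r_{ij}$ and $r_{kl}$ and that slot is simply relabelled by $\tau$. On the other hand, skew-symmetry of $r$ gives $r_{ji}=-r_{ij}$ as elements of $A^{\otimes 3}$, and hence by bilinearity of $\star$ one obtains $r_{ji}\star r_{kl}=-r_{ij}\star r_{kl}$ and $r_{ij}\star r_{lk}=-r_{ij}\star r_{kl}$.

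Using these two principles, the proof reduces to applying each of the six elements of $S_3$ to Eq.~(4.8) and then re-expressing each $r_{\tau(i)\tau(j)}$ in canonical form with the smaller index first (picking up a sign whenever $\tau(i)>\tau(j)$). The correspondence I expect is $e\mapsto (4.8)$, $(1\,2)\mapsto (4.11)$, $(2\,3)\mapsto (4.13)$, $(1\,3)\mapsto (4.10)$, $(1\,2\,3)\mapsto (4.14)$, and $(1\,3\,2)\mapsto (4.12)$. Since each $\tau$ is an invertible linear map on $A^{\otimes 3}$, the equation $X=0$ is equivalent to $\tau(X)=0$; hence all six equations are pairwise equivalent, which in particular delivers $(1)\Leftrightarrow (2)\Leftrightarrow (3)$.

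The only delicate point is sign bookkeeping. As a representative sample check, for $\tau=(1\,3)$ one computes $\tau(r_{13}\circ r_{23})=r_{31}\circ r_{21}=(-r_{13})\circ(-r_{12})=r_{13}\circ r_{12}$, $\tau(r_{12}\bullet r_{23})=r_{32}\bullet r_{21}=r_{23}\bullet r_{12}$, and $\tau(r_{12}\triangleleft r_{13})=r_{32}\triangleleft r_{31}=r_{23}\triangleleft r_{13}$; substituting into Eq.~(4.8) yields $r_{13}\circ r_{12}+r_{23}\bullet r_{12}-r_{23}\triangleleft r_{13}=0$, which after rearrangement is precisely Eq.~(4.10). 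The other four non-trivial cases are strictly analogous, and together they establish the claimed chain of equivalences.
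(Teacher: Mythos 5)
Your proof is correct and takes essentially the same route as the paper's: the paper likewise observes that Eqs.~(4.10)--(4.14) are precisely Eq.~(4.8) under the action of the five non-identity elements of $\Sigma_3$ permuting the tensor factors of $A\otimes A\otimes A$, with the skew-symmetry of $r$ supplying the signs. Your explicit relabelling rule and the sample verification for $(1\,3)$ merely fill in the sign bookkeeping that the paper leaves as ``straightforward to check.''
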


\begin{proof} Let $\sigma$ be any element in the permutation group
$\Sigma_3$ acting on $\{ 1,2,3\}$. Then $\sigma$ induces a linear
map from $A\otimes A\otimes A$ to $A\otimes A\otimes A$ by (we still
denote it by $\sigma$)
$$\sigma (x_1\otimes x_2\otimes x_3)=x_{\sigma(1)}\otimes
x_{\sigma(2)}\otimes x_{\sigma(3)},\;\;\forall x_1,x_2,x_3\in A.$$
Hence it is straightforward to check that Eqs. (4.10)-(4.14) are the
Eq. (4.8) under the action of the non-unit elements $\sigma\in
\Sigma_3$ respectively. Note that the skew-symmetry of $r$ is used.
\end{proof}

\begin{coro}  Let $(A,\triangleright,\triangleleft)$
be an L-dendriform algebra and $r \in A \otimes A$ be
skew-symmetric. Then Eq. (4.9) holds if Eq. (4.10) holds.
\end{coro}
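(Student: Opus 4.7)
The plan is to derive Eq.~(4.9) from a single linear combination of the equivalent forms guaranteed by Lemma~4.6. Since $r$ is skew-symmetric, Lemma~4.6 tells us that Eq.~(4.10) forces all of Eqs.~(4.11)--(4.14) to hold. Among these, the two I would use are Eq.~(4.13) and Eq.~(4.12), chosen because both carry $r_{13}\triangleleft$ on the slots $\{1,3\}$ and differ only in which of the remaining tensor slots plays the role of the second factor.

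Subtracting Eq.~(4.12) from Eq.~(4.13) yields
\[
(r_{12}\circ r_{23}-r_{23}\circ r_{12})+(r_{13}\bullet r_{23}-r_{13}\bullet r_{12})-(r_{13}\triangleleft r_{23}-r_{13}\triangleleft r_{12})=0.
\]
Next, I would apply two universal identities in $A$: the bracket formula $[x,y]=x\circ y-y\circ x$ of the sub-adjacent Lie algebra $\frak g(A)$ (Proposition~3.2(3)), and the reconstruction $x\triangleright z=x\bullet z-x\triangleleft z$, which is immediate from the definition $x\bullet z=x\triangleright z+x\triangleleft z$. Since the operations $\circ,\bullet,\triangleleft,\triangleright$ act only on the two live tensor factors in each $r_{ij}$, these pointwise identities lift verbatim to the appropriate tensor slots. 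The first parenthesized group then collapses to $[r_{12},r_{23}]$, while the second and third groups combine into $r_{13}\triangleright r_{23}-r_{13}\triangleright r_{12}$. The resulting identity is exactly Eq.~(4.9).

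There is essentially no obstacle beyond routine index bookkeeping; the real content is locating the ``right'' pair of equivalent forms from Lemma~4.6 (namely (4.12) and (4.13)) whose difference already has the structure of (4.9). No further appeal to the skew-symmetry of $r$ is needed beyond what is already absorbed in Lemma~4.6, and no computation involving a basis or a decomposition $r=\sum_i a_i\otimes b_i$ is required, although one could carry out the same verification that way.
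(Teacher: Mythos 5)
Your proof is correct and is essentially the paper's own argument: the paper likewise observes that Eq.~(4.9) is exactly the difference of Eqs.~(4.12) and (4.13), both of which hold under Eq.~(4.10) by Lemma~4.6. Your additional bookkeeping---relabeling dummy indices so that $r_{12}\circ r_{23}-r_{23}\circ r_{12}=[r_{12},r_{23}]$ and collapsing $\bullet-\triangleleft$ to $\triangleright$ slotwise---just makes explicit what the paper leaves as ``follows immediately.''
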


\begin{proof}
Note that Eq. (4.9) is exactly the difference between Eqs. (4.12)
and (4.13). Then the conclusion follows immediately.
\end{proof}

\begin{defn}{\rm  Let $(A, \triangleright, \triangleleft)$ be an L-dendriform
algebra and $r\in A \otimes A$. Equation (4.8) is called {\it
$LD$-equation} in $(A, \triangleright, \triangleleft)$.}\end{defn}

Combining Proposition 4.4, Proposition 4.5, Lemma 4.6 and
Corollary 4.7, we obtain the following conclusion:

\begin{coro}
Let $(A,\triangleright,\triangleleft)$ be an L-dendriform algebra
and $r \in A \otimes A$ be skew-symmetric.  Let $(A,\circ)$ and
$(A,\bullet)$ be the associated vertical and horizontal pre-Lie
algebras respectively. Then the following conditions are equivalent.

{\rm (1)} $r$ is a solution of $LD$-equation in $(A,\
\triangleright,\triangleleft)$.

{\rm (2)} $r$ is an $\mathcal {O}$-operator of
 $(A,\triangleright,\triangleleft)$ associated to
 $(L_{\triangleright}^* + L_{\triangleleft}^* -
R_{\triangleright}^* - R_{\triangleleft}^*,R_{\triangleright}^*,\
R_{\triangleright}^* - L_{\triangleleft}^*,-(R_{\triangleright}^* +
R_{\triangleleft}^*),A^*)$.

{\rm (3)} $r$ is an $\mathcal {O}$-operator of $(A, \circ)$
associated to
 $(L_{\triangleright}^* +
L_{\triangleleft}^*,L_{\triangleleft}^*,\ A^*)$.

{\rm (4)} $r$ is an $\mathcal {O}$-operator of   $(A, \bullet)$
associated to  $(L_{\triangleright}^* - R_{\triangleleft}^*,\
-R_{\triangleleft}^*,A^*)$.
\end{coro}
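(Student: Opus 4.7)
The plan is to derive all four equivalences purely by assembling the preceding results, with no fresh computation required. The statement (1) is, by Definition 4.8, nothing but the assertion that $r$ satisfies Eq. (4.8), the $LD$-equation. Proposition 4.4 has already established that Eq. (4.8) is equivalent to each of (3) and (4), so half of the corollary is immediate; the remaining task is to hook (2) into this chain.

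To do so, I would use Proposition 4.5 to rephrase (2) as the conjunction of Eq. (4.9) and Eq. (4.10), and then appeal twice to the consequences of skew-symmetry. First, Lemma 4.6 identifies Eq. (4.8) with Eq. (4.10) (each of (4.11)--(4.14) is obtained by letting the non-identity elements of $\Sigma_{3}$ act on (4.8) and using $r = -\sigma(r)$), so (1) is equivalent to Eq. (4.10) alone. Second, Corollary 4.7 tells us that Eq. (4.10) already forces Eq. (4.9), because (4.9) is literally the difference of (4.12) and (4.13). Therefore (1) implies both Eq. (4.9) and Eq. (4.10), which by Proposition 4.5 yields (2); conversely (2) trivially gives Eq. (4.10), hence (1). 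Combining with Proposition 4.4 completes the cycle $(1)\Leftrightarrow(2)\Leftrightarrow(3)\Leftrightarrow(4)$.

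There is essentially no obstacle here: the heavy lifting, namely the permutation-theoretic bookkeeping of Lemma 4.6 and the dualization underlying Propositions 4.4 and 4.5, has already been carried out upstream. The only thing to be careful about is to make the logical direction of Corollary 4.7 explicit --- Eq. (4.10)$\Rightarrow$Eq. (4.9), not the reverse --- since this is what allows Eq. (4.10) by itself to deliver the full condition (2). Beyond this, the proof is a short diagram-chase and can be written in a single line of implications.
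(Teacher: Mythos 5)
Your proposal is correct and matches the paper's own proof, which simply combines Proposition 4.4, Proposition 4.5, Lemma 4.6 and Corollary 4.7 in exactly the way you describe: (1)$\Leftrightarrow$(3)$\Leftrightarrow$(4) via Proposition 4.4, and (1)$\Leftrightarrow$(2) via Lemma 4.6 identifying Eq.~(4.8) with Eq.~(4.10) together with Corollary 4.7 supplying Eq.~(4.9) from Eq.~(4.10). Your explicit care about the one-directional nature of Corollary 4.7 is a point the paper leaves implicit, but the argument is the same.
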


\begin{remark}
{\rm Due to the above result, it is reasonable to regard the
$LD$-equation in an L-dendriform algebra as an analogue of the CYBE
in a Lie algebra (also see \cite{Bai1,Bai2} and Theorem 2.12).}
\end{remark}

\begin{lemma}
Let $(A, \triangleright, \triangleleft)$ be an L-dendriform algebra
and $T:A^*\rightarrow A$ be an invertible linear map. Then $T$ is an
$\mathcal{O}$-operator of $(A, \triangleright, \triangleleft)$
associated to $(L_{\triangleright}^* + L_{\triangleleft}^* -
R_{\triangleright}^* - R_{\triangleleft}^*,\
R_{\triangleright}^*,R_{\triangleright}^* - L_{\triangleleft}^*,\
 -(R_{\triangleright}^* +
R_{\triangleleft}^*),A^*)$ if and only if the bilinear form
$\mathcal B$ induced by $T$ through Eq. (1.8) satisfies
\begin{equation}
\mathcal {B}(x \triangleright y, z) =  -\mathcal {B}(y, [x, z])
 - \mathcal {B}(x, z \triangleright y),
\end{equation}
\begin{equation}
\mathcal {B}(x \triangleleft y, z) =  -\mathcal {B}(y, z \circ x) +
\mathcal {B}(x, z \bullet y), \;\;\forall
x, y, z \in A.
\end{equation}
where $x\circ y = x \triangleright y - y \triangleleft x,x\bullet y
= x \triangleright y + x \triangleleft y,[x, y] = x\circ y - y \circ
x = x\bullet y - y\bullet x$. If, in addition, $\mathcal B$ is
skew-symmetric, then $\mathcal B$ satisfies Eq. (4.15) if $\mathcal
B$ satisfies Eq. (4.16).

\end{lemma}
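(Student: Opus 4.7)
The approach is to translate the $\mathcal{O}$-operator condition into a statement about $\mathcal{B}$ by applying $T^{-1}$ and pairing with an arbitrary third vector. Setting $x = T(u)$, $y = T(v)$ and using $\mathcal{B}(a, b) = \langle T^{-1} a, b\rangle$ from Eq. (1.8), the two defining relations of an $\mathcal{O}$-operator become
$$T^{-1}(x \triangleright y) = (L_\triangleright^* + L_\triangleleft^* - R_\triangleright^* - R_\triangleleft^*)(x)\, T^{-1}(y) + R_\triangleright^*(y)\, T^{-1}(x)$$
together with the analogue for $x \triangleleft y$. Pairing each with arbitrary $z \in A$ and using Eq. (1.9) to convert every dual term into $-\mathcal{B}(y, \cdot)$ or $-\mathcal{B}(x, \cdot)$, the four terms attached to $T^{-1}(y)$ in the first equation collapse via Eq. (3.5) into $-\mathcal{B}(y, [x, z])$, while the remaining $R_\triangleright^*(y)\, T^{-1}(x)$ term yields $-\mathcal{B}(x, z \triangleright y)$, producing (4.15). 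An entirely parallel calculation for $\triangleleft$ produces (4.16), with the combinations $z \triangleright x - x \triangleleft z = z \circ x$ and $z \triangleright y + z \triangleleft y = z \bullet y$ emerging naturally. Invertibility of $T$ ensures $x$ and $y$ range over all of $A$, establishing the equivalence.

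For the final implication, assume $\mathcal{B}$ is skew-symmetric and satisfies (4.16). I would rewrite $x \triangleright y = x \circ y + y \triangleleft x$ and apply (4.16) with $x$ and $y$ interchanged to obtain
$$\mathcal{B}(x \triangleright y, z) = \mathcal{B}(x \circ y, z) - \mathcal{B}(x, z \circ y) + \mathcal{B}(y, z \bullet x).$$
To match this against the right-hand side of (4.15), I would expand $z \triangleright y = z \circ y + y \triangleleft z$, apply (4.16) with the substitution $(x, y, z) \mapsto (y, z, x)$ combined with skew-symmetry to evaluate $\mathcal{B}(x, y \triangleleft z)$, and then invoke the identity $x \bullet z = x \circ z - z \circ x + z \bullet x$ coming from $[x, z] = x \bullet z - z \bullet x = x \circ z - z \circ x$. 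After cancellations, the required equality reduces to $\mathcal{B}(x \circ y, z) = -\mathcal{B}(z, x \circ y)$, which is exactly skew-symmetry.

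I do not foresee a genuine obstacle beyond careful sign-tracking; the whole proof is essentially bookkeeping. The step most prone to error is the last, where the right instances of (4.16), skew-symmetry, and the identities relating $\circ$, $\bullet$, and $[\,,\,]$ must be selected so that the stray terms cancel cleanly.
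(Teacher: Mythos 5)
Your proof is correct and is exactly the computation the paper intends: the paper's own ``proof'' of this lemma is the single remark ``It is straightforward,'' and your argument---unwinding the two $\mathcal{O}$-operator identities via Eqs.\ (1.8) and (1.9) so that the four dual-operator terms collapse through Eq.\ (3.5) to $-\mathcal{B}(y,[x,z])$, and then deriving (4.15) from (4.16) and skew-symmetry using $x\triangleright y = x\circ y + y\triangleleft x$ together with $[x,z]=x\bullet z-z\bullet x$---is precisely that elided verification. I checked the final reduction and it does cancel cleanly to $\mathcal{B}(x\circ y,z)=-\mathcal{B}(z,x\circ y)$, as you claim.
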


\begin{proof} It is straightforward.\end{proof}

\begin{coro}
Let $(A, \triangleright, \triangleleft)$ be an L-dendriform algebra
and $r \in A\otimes A$. Suppose that $r$ is skew-symmetric and
invertible. Then $r$ is a solution of $LD$-equation in $(A,
\triangleright, \triangleleft)$ if and only if the nondegenerate
bilinear form $\mathcal {B}$ induced by $r$ through Eq. (1.8)
satisfies Eq. (4.16).
\end{coro}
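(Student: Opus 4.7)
The plan is to deduce this corollary by chaining Corollary 4.9 and Lemma 4.11, with the skew-symmetry of $\mathcal{B}$ playing the key simplifying role.

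First, I would unpack the identifications. Since $r$ is skew-symmetric and invertible, Eq. (1.7) lets us view $r$ as an invertible linear map $T:A^*\to A$, and then Eq. (1.8) defines the associated bilinear form $\mathcal{B}$ on $A$. A quick check from the definitions shows that $\mathcal{B}$ is skew-symmetric precisely because $r+\sigma(r)=0$: writing $r=\sum_i a_i\otimes b_i$ with $\sum_i a_i\otimes b_i = -\sum_i b_i\otimes a_i$, the pairing $\langle T^{-1}u,v\rangle$ is antisymmetric in $u,v$.

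Next, by Corollary 4.9, $r$ is a solution of the $LD$-equation in $(A,\triangleright,\triangleleft)$ if and only if $r$ (viewed as the map $T$) is an $\mathcal{O}$-operator associated to $(L_{\triangleright}^* + L_{\triangleleft}^* - R_{\triangleright}^* - R_{\triangleleft}^*,\ R_{\triangleright}^*,\ R_{\triangleright}^* - L_{\triangleleft}^*,\ -(R_{\triangleright}^* + R_{\triangleleft}^*),\ A^*)$. Because $T$ is invertible, Lemma 4.11 translates this $\mathcal{O}$-operator property into the conjunction of Eq. (4.15) and Eq. (4.16) for $\mathcal{B}$.

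Finally, invoking the last clause of Lemma 4.11: when $\mathcal{B}$ is skew-symmetric, Eq. (4.16) already forces Eq. (4.15). Therefore the pair of conditions reduces to Eq. (4.16) alone, giving the claimed equivalence. There is no real obstacle here, the statement being essentially a specialization of Lemma 4.11 and Corollary 4.9; the only point that deserves care is verifying that invertibility and skew-symmetry of $r$ really do transfer to invertibility of $T$ and skew-symmetry of $\mathcal{B}$, so that both the invertibility hypothesis of Lemma 4.11 and its skew-symmetric addendum apply.
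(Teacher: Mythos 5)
Your proposal is correct and follows essentially the same route as the paper, whose proof is the one-line observation that the corollary follows from Lemma 4.11 and Corollary 4.9; your chaining of these two results, together with the skew-symmetry addendum of Lemma 4.11 to drop Eq. (4.15), is exactly the intended argument. The extra care you take in checking that skew-symmetry of $r$ transfers to skew-symmetry of $\mathcal{B}$ via Eq. (1.8) is a detail the paper leaves implicit, and it is verified correctly.
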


\begin{proof}
It follows from Lemma  4.11 and Corollary 4.9.
\end{proof}

\begin{defn}
{\rm Let $(A, \triangleright, \triangleleft)$ be an L-dendriform
algebra. A skew-symmetric bilinear form $\mathcal {B}$ satisfying
Eq. (4.16) is called a {\it 2-cocycle} of  $(A, \triangleright,
\triangleleft)$.}
\end{defn}



 By a similar proof as of Theorem 2.13. we have the following conclusion:

\begin{theorem} Let $(A, \triangleright, \triangleleft)$
be an L-dendriform algebra and $(l_\triangleright, r_\triangleright,
l_\triangleleft, r_\triangleleft, V)$ be a module.  Let $T:
V\rightarrow A$ be a linear map which can be identified as an
element in the vector space $(A\oplus V^*)\otimes (A\oplus V^*)$.
Then $ r = T - \sigma(T)$  is a skew-symmetric solution of
$LD$-equation in the L-dendriform algebra
$A\ltimes_{l_{\triangleright}^*+l_{\triangleleft}^*-r_{\triangleright}^*-r_{\triangleleft}^*,\
r_{\triangleright}^*,r_{\triangleright}^*-l_{\triangleleft}^*,\
-(r_{\triangleright}^*+r_{\triangleleft}^*)}V^*$ if and only if $T$
is an $\mathcal{O}$-operator of  $(A, \triangleright,
\triangleleft)$ associated to $(l_\triangleright, r_\triangleright,
l_\triangleleft, r_\triangleleft, V)$.
\end{theorem}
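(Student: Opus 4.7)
The plan is to imitate the calculation of Theorem 2.13 in the larger L-dendriform setting. Fix a basis $\{v_1,\ldots,v_m\}$ of $V$ with dual basis $\{v_1^*,\ldots,v_m^*\}$ and identify $T$ with $\sum_{i} T(v_i)\otimes v_i^* \in A\otimes V^*\subset (A\oplus V^*)^{\otimes 2}$. Write $\hat A$ for the semidirect sum L-dendriform algebra $A\ltimes_{\hat l_\triangleright,\hat r_\triangleright,\hat l_\triangleleft,\hat r_\triangleleft} V^*$, where the module data on $V^*$ is given by Proposition 4.2. Set $r = T - \sigma(T)$, which is automatically skew-symmetric.

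I would then expand the three pieces of the $LD$-equation
$$r_{13}\circ r_{23} + r_{12}\bullet r_{23} - r_{12}\triangleleft r_{13} = 0$$
inside $\hat A^{\otimes 3}$. All four binary operations are performed in $\hat A$, so each composition separates into contributions where both "active" tensor factors lie in $A$ (giving the native L-dendriform products) and contributions where exactly one of them lies in $V^*$ (giving one of the four dual actions prescribed by Proposition 4.2). No triple-$V^*$ contributions survive, and every dual action may be transported back to $V$ via the pairing, e.g.\ $\langle \hat l_\triangleright(T(v_i))v_j^*,\,v_k\rangle = -\langle v_j^*, (l_\triangleright+l_\triangleleft-r_\triangleright-r_\triangleleft)(T(v_i))v_k\rangle$.

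The next step is the bookkeeping. I would collect terms according to which of the three tensor slots carries the unique $A$-factor. When the $A$-factor sits in a slot determined by the $\triangleleft$-action, the coefficient of $v_j^*\otimes T(\cdot)\otimes v_k^*$ collapses (after relabelling the summation index and using the module identities (4.1)--(4.5)) to
$$T(u)\triangleleft T(v) - T\bigl(l_\triangleleft(T(u))v + r_\triangleleft(T(v))u\bigr),$$
and the other slot choices yield the analogous identity with $\triangleright$ in place of $\triangleleft$. Hence the vanishing of the $LD$-equation is equivalent to both $\mathcal O$-operator identities of Definition 4.3.

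The main obstacle is the combinatorial explosion: four operations, four dual actions, three tensor slots, and the skew-symmetrization $T-\sigma(T)$ together generate a great many terms. I would manage this by invoking the equivalent forms of the $LD$-equation from Lemma 4.6 to choose the variant that minimises the mixing of $\triangleright$ and $\triangleleft$, and by reusing the pre-Lie portion of the calculation of Theorem 2.13 for the $\bullet$ and $\circ$ blocks essentially verbatim. The genuinely new content is then confined to the terms containing $\triangleleft$, where the identity $\hat r_\triangleleft = -(r_\triangleright^*+r_\triangleleft^*)$ provides the decisive cancellations. Once this is in place, the equivalence in both directions follows by collecting coefficients.
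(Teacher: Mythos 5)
Your proposal is correct and follows essentially the same route as the paper, which proves this theorem only by the remark ``by a similar proof as of Theorem 2.13'', i.e.\ precisely the basis expansion $T=\sum_i T(v_i)\otimes v_i^*$ inside the semidirect product built from the dual module of Proposition 4.2, followed by comparison of coefficients slot by slot. Your identification of the dual actions via Eq.\ (1.9) and your use of the skew-symmetry of $r=T-\sigma(T)$ together with the equivalent forms of the $LD$-equation from Lemma 4.6 are consistent with how the paper's Theorem 2.13 computation transfers to the L-dendriform setting.
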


\section{Generalization}
 \setcounter{equation}{0}
\renewcommand{\theequation}
{5.\arabic{equation}}

The study in the previous sections motivates us to consider the
following structures: let $(X,[,])$ be a Lie algebra and
$(*_i)_{1\leq i\leq N}:X\otimes X\rightarrow X$ be a family of
binary operations on $X$. Then the Lie bracket $[,]$ splits into the
commutator of $N$ binary operations $*_1,\cdots,
*_N$ if
\begin{equation}
[x,y]=\sum_{i=1}^N (x*_iy-y*_ix),\;\;\forall x,y\in X.
\end{equation} Note that when $N=1$, the algebra with the binary operation
$*_1=*$ is a Lie-admissible algebra.

Like the Loday algebras, only Eq. (5.1) is too general to get more
interesting structures. So some additional conditions for the binary
operations $*_i$ are necessary. We pay our main attention to the
case that $N=2^n$, $n=0,1,2,\cdots$. As the study in the cases of
associative algebras given in \cite{Bai3} and the study in the
previous sections, we define a ``rule'' of constructing the binary
operations $*_i$ as follows: the $2^{n+1}$ binary operations give a
natural module structure of an algebra with the $2^{n}$ binary
operations on the underlying vector space of the algebra itself,
which is the beauty of such algebra structures.  That is, by
induction, for the algebra $(A,
*_i)_{1\leq i\leq 2^n}$, besides the natural module of $A$ on
 the underlying vector space of
$A$ itself given by the left and right multiplication operators, one
can introduce the $2^{n+1}$ binary operations $\{*_{i_1},
*_{i_2}\}_{1\leq i\leq 2^n}$ such that
\begin{equation}
x*_i y=x*_{i_1} y-y*_{i_2}x,\;\;\forall x,y\in A, \; 1\leq i\leq
2^n, \end{equation} and their left or right multiplication operators
give a module of $(A,
*_i)_{1\leq i\leq 2^n}$ by acting on the underlying vector space of
$A$ itself.

In particular, when $N=1$ and $N=2$, the corresponding algebra
$(A,*_i)_{1\leq i\leq N}$ according to the above rule is exactly a
pre-Lie algebra and an L-dendriform algebra respectively. On the
other hand, note that for $n\geq 1$ ($N\geq 2$), in order to make
Eq. (5.1) be satisfied, there is an alternative (sum) form of Eq.
(5.2)
\begin{equation} x*_i y=x*_{i_1}y +x*_{i_2}'y,\;\;\forall x,y\in
A,\; 1\leq i\leq 2^n,\end{equation} by letting $x
*_{i_2}'y=-y*_{i_2}x$ for any $x,y\in A$. In particular, in such a
situation, it can be regarded as a binary operation $*$ of a pre-Lie
algebra that splits into the $N=2^n$ ($n=1,2\cdots$) binary
operations $*_1,...,*_N$. In this sense, an L-dendriform algebra is
also regarded as a ``pre-Lie algebraic analogue" of a dendriform
algebra.

We would like to point out that in the case of associative algebras,
there is an outline of a study by induction on the algebra systems
with more binary operations given in \cite{Bai3}, which is still
valid for the case of Lie algebras.

\section*{Acknowledgements}

This work was supported in part by NSFC (10621101, 10921061), NKBRPC
(2006CB805905) and SRFDP (200800550015). We thank the referee for
the important suggestions.

\end{document}